\documentclass[11pt,letter]{article}

\usepackage{fullpage}
\usepackage[utf8x]{inputenc}
\usepackage{amsmath, amsthm}
\usepackage{wrapfig,graphicx,amssymb,textcomp,array,amsmath,enumitem}
\usepackage{yhmath}

\usepackage{color}
\usepackage{changepage}
\usepackage{algpseudocode} 
\algtext*{EndWhile}
\algtext*{EndIf}
\algtext*{EndFor}
\usepackage{algorithm}

\setlength{\arraycolsep}{0in}
 
\title{Compatible 4-Holes in Point Sets}

\author{Ahmad Biniaz\thanks{This research is supported by NSERC.}\and
Anil Maheshwari\and Michiel Smid}

\renewcommand\footnotemark{}

\newcommand{\removed}[1]{}
\newcommand{\etal}{{\em et al.}}

\newcommand{\cone}[3]{C(#1{:}#2,#3)}

\newcommand{\rotatingl}[3]{h(#1{:}#2{\rightarrow}#3)}
\newcommand{\CH}[1]{CH(#1)}

\newtheorem{lemma}{Lemma}

\newtheorem{conjecture}{Conjecture}
\newtheorem{theorem}{Theorem}
\newtheorem{observation}{Observation}
\newtheorem{proposition}{Proposition}
\newtheorem*{problem*}{Problem}

\begin{document}

\maketitle

\begin{abstract}
Counting interior-disjoint empty convex polygons in a point set is a typical {E}rd{\H{o}}s-{S}zekeres-type problem. We study this problem for 4-gons. Let $P$ be a set of $n$ points in the plane and in general position. A subset $Q$ of $P$, with four points, is called a $4$-{\em hole} in $P$ if $Q$ is in convex position and its convex hull does not contain any point of $P$ in its interior. 
Two 4-holes in $P$ are {\em compatible} if their interiors are disjoint. 
We show that $P$ contains at least $\lfloor 5n/11\rfloor {-} 1$ pairwise compatible 4-holes. This improves the lower bound of $2\lfloor(n-2)/5\rfloor$ which is implied by a result of Sakai and Urrutia (2007).
\end{abstract}


\section{Introduction}
\label{introduction-section}
Throughout this paper, an $n$-{\em set} is a set of $n$ points in the plane and in general position, i.e., no three points are collinear. Let $P$ be an $n$-set. A {\em hole} in $P$ is a subset $Q$ of $P$, with at least three elements, such that $Q$ is in convex position and no element of $P$ lies in the interior of the convex hull of $Q$. A $k$-{\em hole} in $P$ is a hole with $k$ elements. By this definition, a 3-hole in $P$ is an empty triangle with vertices in $P$, and a 4-hole in $P$ is an empty convex quadrilateral with vertices in $P$.  

The problem of finding and counting holes in point sets has a long history in discrete combinatorial geometry, and has been an active research area since {E}rd{\H{o}}s and {S}zekeres~\cite{Erdos1978, Erdos1935} asked about the existence of $k$-holes in a point set.
In 1931, Esther Klein showed that any 5-set contains a convex quadrilateral~\cite{Erdos1935}; it is easy to see that it also contains a 4-hole. In 1978, Harborth~\cite{Harborth1978} proved that any 10-set contains a 5-hole. In 1983, Horton~\cite{Horton1983} exhibited arbitrarily large point sets with no 7-hole. The existence of a 6-hole in sufficiently large point sets has been proved by Nicol\'{a}s~\cite{Nicolas2007} and Gerken~\cite{Gerken2008}; a shorter proof of this result is given by Valtr~\cite{Valtr2008}. 

\begin{figure}[htb]
  \centering
\setlength{\tabcolsep}{0in}
  $\begin{tabular}{cc}
 \multicolumn{1}{m{.5\columnwidth}}{\centering\includegraphics[width=.25\columnwidth]{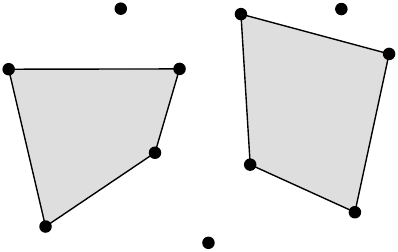}}
&\multicolumn{1}{m{.5\columnwidth}}{\centering\includegraphics[width=.25\columnwidth]{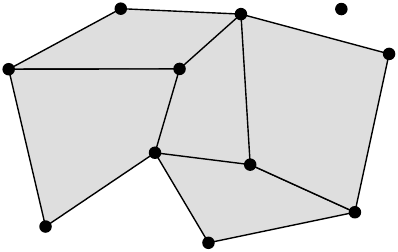}}
\end{tabular}$
  \caption{Two disjoint 4-holes (left), and five compatible 4-holes (right).}
\label{disjoint-compatible-fig}
\end{figure}

Two holes $Q_1$ and $Q_2$ are {\em disjoint} if their convex hulls are disjoint, i.e., they do not share any vertex and do not overlap. We say sat $Q_1$ and $Q_2$ are {\em compatible} if the interiors of their convex hulls are disjoint, that is, they can share vertices but do not overlap. A set of holes is called disjoint (resp. compatible) if its elements are pairwise disjoint (resp. compatible). See Figure~\ref{disjoint-compatible-fig}. 

Since every three points form the vertices of a triangle, by repeatedly creating a triangle with the three leftmost points of an $n$-set we obtain exactly $\lfloor n/3\rfloor$ disjoint 3-holes. However, this does not generalize to 4-holes, because the four leftmost points may not be in convex position. Obviously, the number of disjoint 4-holes in an $n$-set is at most $\lfloor n/4\rfloor$. Hosono and Urabe~\cite{Hosono2001} proved that the number of disjoint 4-holes is at least $\lfloor 5n/22\rfloor$; they improved this bound to $(3n{-}1)/13$ when $n=13\cdot 2^k {-}4$ for some $k\geqslant 0$. A variant of this problem where the 4-holes are vertex-disjoint, but can overlap, is considered in \cite{Wu2008}.
As for compatible holes, it is easy to verify that the number of compatible 3-holes in any $n$-set is at least $n{-}2$ and at most $2n{-}5$; these bounds are obtained by triangulating the point set: we get $n{-}2$ triangles, when the point set is in convex position, and $2n{-}5$ triangles, when the convex hull of the point set is a triangle. Sakai and Urrutia~\cite{Sakai2007} proved among other results that any 7-set contains at least two compatible 4-holes. In this paper we study the problem of finding the maximum number of compatible 4-holes in an $n$-set.  

Devillers~\etal~\cite{Devillers2003} considered some colored variants of this problem. They proved among other results that any bichromatic $n$-set has at least $\lceil n/4\rceil {-}2$ compatible monochromatic 3-holes; they also provided a matching upper bound. As for 4-holes, they conjectured that a sufficiently large bichromatic point set has a monochromatic 4-hole. Observe that any point set that disproves this conjecture does not have a 7-hole (regardless of colors). For a bichromatic point set $R\cup B$ in the plane, Sakai and Urrutia~\cite{Sakai2007} proved that if $|R|\geqslant 2|B|{+}5$, then there exists a monochromatic 4-hole. They also studied the problem of blocking 4-holes in a given point set $R$; the goal in this problem is to find a smallest point set $B$ such that any 4-hole in $R$ has a point of $B$ in its interior. The problem of blocking 5-holes has been studied by Cano~\etal~\cite{Cano2015}.

Aichholzer~\etal~\cite{Aichholzer2007} proved that every 11-set contains either a 6-hole, or a 5-hole and a disjoint 4-hole. Bhattacharya and Das~\cite{Bhattacharya2011} proved that every 12-set contains a 5-hole and a disjoint 4-hole. They also proved the existence of two disjoint 5-holes in every 19-set \cite{Bhattacharya2013}.
For more results on the number of $k$-holes in small point sets and other variations, the reader is referred to a paper by Aichholzer and Krasser~\cite{Aichholzer2001}, a summary of recent results by Aichholzer~\etal~\cite{Aichholzer2015}, and B. Vogtenhuber's doctoral thesis~\cite{Vogtenhuber2011}. Researchers also have studied the problem of counting the number of (not necessarily empty nor compatible) convex quadrilaterals in a point set; see, e.g.,  \cite{Aichholzer2006, Brodsky2003, Lovasz2004, Wagner2003}. 

A quadrangulation of a point set $P$ in the plane is a planar subdivision whose vertices are the points of $P$, whose outer face is the convex hull of $P$, and every internal face is a quadrilateral; in fact the quadrilaterals are empty and pairwise compatible. Similar to triangulations, quadrangulations have applications in finite element mesh generation, Geographic Information Systems (GIS), scattered data interpolation, etc.; see~\cite{Bose2002, Bose1997, Ramaswami1998, Toussaint1995}. Most of these applications look for a quadrangulation that has the maximum number of convex quadrilaterals. To maximize the number of convex quadrilaterals, various heuristics and experimental results are presented in \cite{Bose2002, Bose1997}. This raises another motivation to study theoretical aspects of compatible empty convex quadrilaterals in a planar point set.

In this paper we study lower and upper bounds for the number of compatible 4-holes in point sets in the plane. A trivial upper bound is $\lfloor n/2\rfloor -1$ which comes from $n$ points in convex position. The $\lfloor 5n/22\rfloor$ lower bound on the number of disjoint 4-holes that is proved by Hosono and Urabe~\cite{Hosono2001}, simply carries over to the number of compatible 4-holes. Also, as we will see in Section~\ref{preliminary-section}, the lower bound of $2\lfloor(n-2)/5\rfloor$ on the number of compatible 4-holes is implied by a result of Sakai and Urrutia \cite{Sakai2007}. After some new results for small point sets, we prove non-trivial
lower bounds on the number of compatible 4-holes in an $n$-set.  In Section~\ref{preliminary-section} we introduce some notations and prove some preliminary results. In Section~\ref{9-11-section} we prove that every 9-set (resp. 11-set) contains three (resp. four) compatible 4-holes. Using these results, in Section~\ref{n-section}, we prove that every $n$-set contains at least $\lfloor 5n/11\rfloor {-} 1$ compatible 4-holes. Our proof of this lower bound is constructive, and  immediately yields an $O(n\log ^2 n)$-time algorithm for finding this many compatible 4-holes. 
\removed{In summary, the real bound is conjectured to be roughly $0.5 n$, while the best previously known lower bound is roughly $0.4 n$. We prove a roughly $0.454 n$ lower bound, which is a considerable progress towards closing the gap.}

Since the initial presentation of this work \cite{Biniaz2017}, the problem has attracted further attention. Most prominently, the lower bound on the number of compatible 4-holes has been improved to $\lceil\frac{n-3}{2}\rceil$ by Cravioto-Lagos, Gonz\'{a}lez-Mart\'{\i}nez, Sakai, and Urrutia \cite{Urrutia2017}. The same bound is claimed in an abstract by Lomeli-Haro, Sakai, and Urrutia in Kyoto International Conference on Computational Geometry and Graph Theory (KyotoCGGT2007) \cite{Lomeli2007}. However, this result has not been published yet. 
\section{Preliminaries}
\label{preliminary-section}
\begin{wrapfigure}{r}{1.8in} 
\includegraphics[width=1.8in]{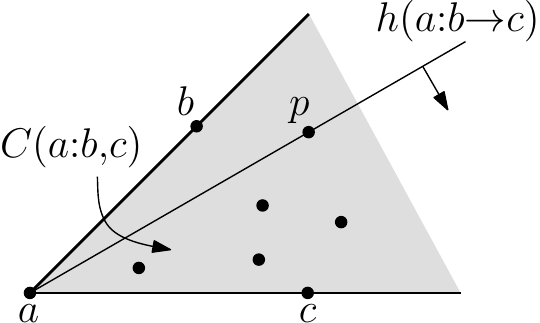} 
\vspace{-20pt} 
\end{wrapfigure}
First we introduce some notation from \cite{Hosono2001}.
We define the {\em convex cone} $\cone{a}{b}{c}$ to be the region of the angular domain in the plane that is determined by three non-collinear points $a$, $b$, and $c$, where $a$ is the apex, $b$ and $c$ are on the boundary of the domain, and $\angle bac$ is acute (less than $\pi/2$). 
We denote by $\rotatingl{a}{b}{c}$ the rotated half-line that is anchored at $a$ and rotates, in $\cone{a}{b}{c}$, from the half-line $ab$ to the half-line $ac$. If the interior of $\cone{a}{b}{c}$ contains some points of a given point set,
then we call the first point that $\rotatingl{a}{b}{c}$ meets the {\em attack point} of $\rotatingl{a}{b}{c}$; the point $p$ in the figure to the right is the attack point. 

Let $P$ be an $n$-set. We denote by $\CH{P}$ the convex hull of $P$. Let $p_0$ be the bottommost vertex on $\CH{P}$. Without loss of generality assume that $p_0$ is the origin. Label the other points of $P$ by $p_1,\dots,p_{n-1}$ in clockwise order around $p_0$, starting from the negative $x$-axis; see Figure~\ref{radial-fig}(a). We refer to the sequence $p_1,\dots,p_{n-1}$ as the {\em radial ordering} of the points of $P\setminus\{p_0\}$ around $p_0$. We denote by $l_{i,j}$ the straight line through two points with indexed labels $p_i$ and $p_j$.

\begin{figure}[htb]
  \centering
\setlength{\tabcolsep}{0in}
  $\begin{tabular}{cc}
 \multicolumn{1}{m{.6\columnwidth}}{\centering\includegraphics[width=.48\columnwidth]{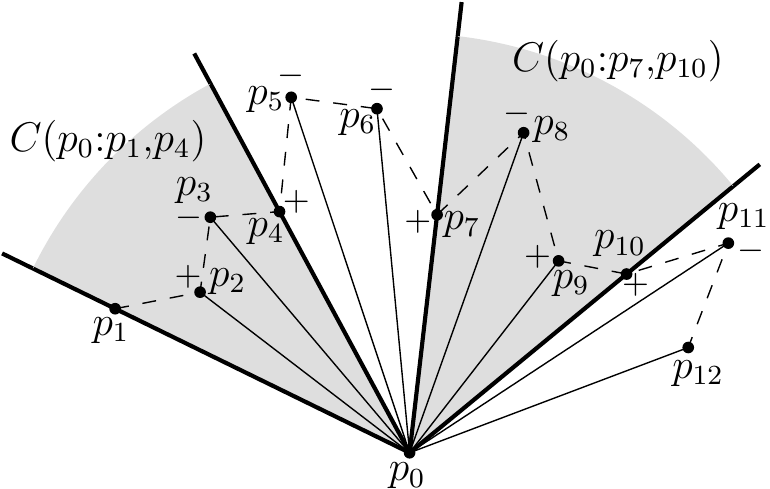}}
&\multicolumn{1}{m{.4\columnwidth}}{\centering\includegraphics[width=.25\columnwidth]{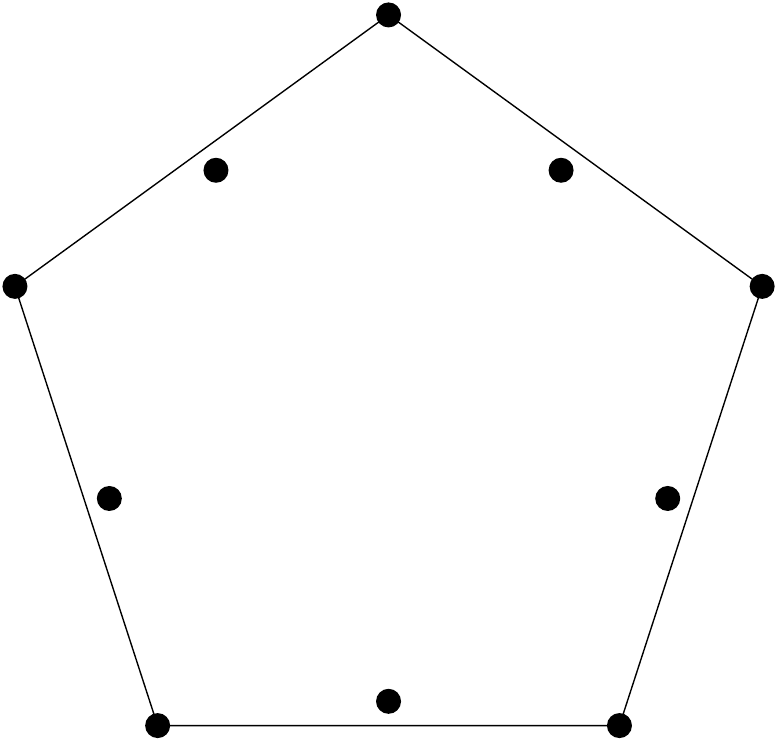}}
\\
(a) & (b)
\end{tabular}$
  \caption{(a) The radial ordering of points around $p_0$. (b) A 10-set with at most three compatible 4-holes.}
\label{radial-fig}
\end{figure}

It is easy to verify that the number of 4-holes in an $n$-set in convex position is exactly $\lfloor n/2\rfloor{-}1$. Figure~\ref{radial-fig}(b), that is borrowed from ~\cite{Hosono2001}, shows an example of a 10-set that contains at most three compatible 4-holes; by removing a vertex from the convex hull, we obtain a 9-set with the same number of 4-holes. This example can be extended to larger point sets, and thus, to the following proposition.

\begin{proposition}
\label{upper-bound-prop}
For every $n\geqslant 3$, there exists an $n$-set that has at most $\lceil n/2\rceil{-}2$ compatible 4-holes.
\end{proposition}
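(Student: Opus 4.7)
The plan is to exhibit, for every $n \geq 3$, an $n$-set whose maximum family of pairwise compatible $4$-holes has size at most $\lceil n/2\rceil - 2$. The construction is anchored at the $10$-set of Figure~\ref{radial-fig}(b), which Hosono and Urabe showed admits at most three compatible $4$-holes (so the bound is already met at $n=10$), and it is extended to larger $n$ by a careful inductive step; smaller and odd cases are handled separately.

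For every odd $n \geq 3$ I simply take $n$ points in convex position. A convex $m$-set admits exactly $\lfloor m/2\rfloor - 1$ compatible $4$-holes (by the standard quadrangulation/fan argument, together with the observation that every four points in convex position form a $4$-hole), and for odd $n$ this equals $\lceil n/2\rceil - 2$, as desired. For the small even values $n \in \{4,6,8\}$ a direct ad-hoc construction suffices: $n=4$ is realized by any point strictly inside a triangle (zero $4$-holes); for $n=6$ and $n=8$ I would take a convex $(n{-}1)$-gon and add a single interior point, placed so that it lies inside all but one quadrilateral of a fan quadrangulation of the $(n{-}1)$-gon, thereby leaving only $n/2 - 2$ compatible $4$-holes.

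For every even $n \geq 10$ I extend the $10$-set inductively. To pass from an even $n$-set meeting the bound to an $(n{+}2)$-set, I fix a convex hull edge $e = uv$ of the current set and add two new points $r,s$ far from the existing set, inside a very narrow cone bisected by the outward normal to~$e$. The placement is designed so that (i) the four points $u,r,v,s$ form a thin empty convex quadrilateral, providing exactly one new $4$-hole, and (ii) every $4$-hole of the enlarged set using $r$ or $s$ must use both of them together with $u$ and $v$. Property (ii) forces every compatible family of the enlarged set to contain at most one new $4$-hole per added pair, so the upper bound grows by exactly one per pair and reaches $3 + (n-10)/2 = n/2 - 2 = \lceil n/2\rceil - 2$ at size $n$, as claimed.

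The principal obstacle will be proving property (ii): ruling out $4$-holes that combine the newly added pair $(r,s)$ with distant ``old'' points in unexpected ways. I expect this to follow from a limiting argument controlled by the angular width of the cone and the distance of $r,s$ from the existing points: any $4$-hole containing $r$ or $s$ would force its remaining vertices to lie extremely close to the line through $uv$, and a short case analysis using the attack-point operator $\rotatingl{a}{b}{c}$ introduced in the previous section should force those remaining vertices to be exactly $u$ and $v$. Once (ii) is established, the upper bound for every $n$ follows at once by induction.
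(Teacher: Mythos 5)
Your odd-$n$ and $n=4$ cases are fine (for odd $n$ a convex $n$-gon indeed admits at most $\lfloor n/2\rfloor-1=\lceil n/2\rceil-2$ compatible 4-holes, by the interior-angle count $2\pi k\leqslant (n-2)\pi$). The even cases, however, contain two genuine errors. First, the ad-hoc constructions for $n=6$ and $n=8$ cannot work. By Lemma~\ref{six-lemma}, \emph{every} 6-set with five hull vertices contains two compatible 4-holes, while the bound you must meet for $n=6$ is $\lceil 6/2\rceil-2=1$; so a convex pentagon plus one interior point is never a valid example. The same phenomenon kills $n=8$: writing $v_1,\dots,v_7$ for the heptagon and $q$ for the interior point, the fan quadrilateral $qv_iv_{i+1}v_{i+2}$ is convex and empty unless $q$ lies in the ear triangle $\bigtriangleup v_iv_{i+1}v_{i+2}$, and $q$ can lie in at most two (consecutive) of the seven ears; hence at least three pairwise compatible fan quadrilaterals survive, while the bound requires at most two. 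Placing $q$ inside the quadrilaterals of one fixed quadrangulation does not block the 4-holes that use $q$ itself as a vertex.

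Second, property (ii) of your inductive step is not just the ``principal obstacle''---it fails. Since $r$ and $s$ both lie outside the old hull beyond the edge $uv$, they are on the same side of the line $uv$, so the quadrilateral $u,r,v,s$ in that cyclic order (whose diagonals are $uv$ and $rs$) cannot even be convex; more importantly, at least one of the two new points, say $r$, lies outside the triangle spanned by the other new point with $u$ and $v$, and then $r$ together with $u$, $v$ and an old point $b$ forming an empty triangle with $u,v$ (such a $b$ exists because $uv$ is a hull edge) is a candidate 4-hole avoiding $s$; likewise thin 4-holes such as $\{r,u,w,w'\}$ aimed at the hull vertices on one side of $u$ and $\{s,v,\dots\}$ aimed at the other side are available, and two of these can be interior-disjoint, so a compatible family may gain \emph{two} quadrilaterals per added pair. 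Pushing $r,s$ farther away in a narrower cone makes these slivers easier to realize, not harder, so the limiting argument you propose goes the wrong way. This is precisely why the paper does not augment a generic set: it extends the specific Hosono--Urabe configuration of Figure~\ref{radial-fig}(b), whose nested structure is what actually caps the count. To repair your proof you would need either that structured extension or a placement of the new pair accompanied by a complete classification of the 4-holes through $r$ and $s$.
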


\begin{proposition}
The number of compatible 4-holes in an $n$-set is at most $n-3$.
\end{proposition}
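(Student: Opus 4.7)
My plan is to reduce the desired bound to the upper bound of $2n-5$ on the number of compatible 3-holes in any $n$-set, which is already noted in the introduction and follows from a routine application of Euler's formula (each inner face of the resulting planar graph has degree at least three, so $3F\leq 2E$, and combining with $V-E+F=2$ yields at most $2V-5\leq 2n-5$ triangular faces). The key observation is that every 4-hole can be cut into two empty triangles by a single diagonal.

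More precisely, given a 4-hole $Q=\{a,b,c,d\}$ in convex position with $\CH{Q}$ containing no point of $P$ in its interior, I would draw the diagonal $ac$; this subdivides $Q$ into the two triangles $\{a,b,c\}$ and $\{a,c,d\}$. Each is a 3-hole of $P$ (emptiness is inherited from $Q$), and their interiors are disjoint since they meet only along the segment $ac$. Next, given a family of $k$ pairwise compatible 4-holes $Q_1,\dots,Q_k$, I would apply this splitting to each $Q_i$ to produce $2k$ empty triangles. I would then check that these $2k$ triangles are pairwise compatible: two triangles coming from the same $Q_i$ have disjoint interiors by the previous step, while two triangles from distinct $Q_i$ and $Q_j$ have interiors contained in the (disjoint) interiors of $Q_i$ and $Q_j$ and so have disjoint interiors as well.

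Therefore $P$ contains at least $2k$ pairwise compatible 3-holes, and the $2n-5$ bound gives $2k\leq 2n-5$. Since $k$ is an integer, this implies $k\leq \lfloor(2n-5)/2\rfloor=n-3$, which is the desired bound. I do not anticipate any serious obstacle: the only step that merits a careful write-up is verifying that the $2k$ triangles produced by the splitting are pairwise compatible, and this is immediate from compatibility of the parent 4-holes.
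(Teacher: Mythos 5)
Your proof is correct, but it takes a genuinely different route from the paper's. The paper argues directly on the 4-holes: it forms the planar graph whose vertex set is $P$ and whose edges are the union of the boundary edges of the $k$ compatible 4-holes, observes that each 4-hole is an internal quadrilateral face of this graph, and applies Euler's formula to conclude that the number of internal 4-faces is at most $n-3$. You instead split each 4-hole along a diagonal into two interior-disjoint empty triangles, obtain $2k$ pairwise compatible 3-holes, and invoke the $2n-5$ upper bound on compatible 3-holes asserted in the introduction, with integrality of $k$ turning $k\leq n-\frac{5}{2}$ into $k\leq n-3$. Both arguments ultimately rest on Euler's formula, but yours delegates the face-counting to the triangle case, where every relevant face has degree exactly three and the count is routine; the paper's direct version must account for a graph whose faces are a mix of quadrilaterals and other (possibly triangular) gaps, which makes the bound on the number of 4-faces slightly more delicate to verify. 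One small point you should make explicit in a final write-up: the $2k$ triangles are pairwise \emph{distinct}, since two coinciding triangles would share an interior point, contradicting either the interior-disjointness of the two halves of one 4-hole or the compatibility of two different 4-holes; with that observed, your reduction is complete.
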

\begin{proof}
Let $P$ be an $n$-set. Consider the maximum number of compatible 4-holes in $P$. The point set $P$ together with an edge set, that is the union of the boundary edges of these 4-holes, introduces a planar graph $G$. Every 4-hole in $P$ corresponds to a 4-face (a face with four edges) in $G$, and vice versa. Using Euler formula for planar graphs one can verify that the number of internal 4-faces of $G$ is at most $n-3$. This implies that the number of 4-holes in $P$ is also at most $n-3$. 
\end{proof}

\begin{theorem}[Klein; see~\cite{Erdos1935}]
\label{Klein-thr}
Every 5-set contains a 4-hole.
\end{theorem}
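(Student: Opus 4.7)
The plan is a straightforward case analysis based on the size of the convex hull of the $5$-set $P$. Let $h \in \{3,4,5\}$ denote $|\CH{P}|$ (recall no three points are collinear).

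If $h = 5$, all five points are in convex position. Then any four consecutive vertices along $\CH{P}$ form an empty convex quadrilateral, since the fifth point lies on the boundary of $\CH{P}$, outside the interior of the quadrilateral. So $P$ contains a $4$-hole.

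If $h = 4$, let $a,b,c,d$ be the hull vertices in clockwise order, and let $p$ be the unique interior point. The diagonal $ac$ partitions the quadrilateral $abcd$ into triangles $abc$ and $acd$, and $p$ lies in the interior of exactly one, say $abc$. I would claim that $a, p, c, d$ (in this cyclic order) form a $4$-hole. Convexity follows because $p$ and $d$ lie on opposite sides of $ac$, so $apcd$ is the union of triangles $apc$ and $acd$ meeting along $ac$, and checking angles at $a$ and $c$ confirms strict convexity. Emptiness follows because the only remaining point, $b$, lies on the opposite side of $ac$ from $d$ (so $b\notin acd$) and lies outside triangle $apc\subseteq abc$ (since $b$ is a vertex of $abc$, not an interior point).

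If $h = 3$, let $a,b,c$ be the hull vertices and let $p,q$ be the two interior points. Consider the line $\ell$ through $p$ and $q$. Since the points are in general position, $\ell$ does not pass through any of $a,b,c$, and since $p,q$ lie in the interior of $\triangle abc$, the line $\ell$ must cross the boundary of $\triangle abc$ at two points on two different edges; consequently $\ell$ separates one hull vertex from the other two. Say $a$ lies on one side of $\ell$ and $b,c$ on the other, and without loss of generality $p$ lies between $q$ and the edge of $\triangle abc$ closer to $\{b,c\}$. Then I would claim $b,p,q,c$ (in the appropriate cyclic order) is a $4$-hole. Convexity follows from the separating property of $\ell$, and emptiness follows because the only other point of $P$ is $a$, which lies on the opposite side of $\ell$ from $\{b,c\}$, hence outside the quadrilateral.

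The only subtle step is the third case, where one must argue that $\ell$ genuinely separates one hull vertex from the other two, and then fix the correct cyclic ordering of $b,p,q,c$ so that the resulting quadrilateral is convex. Both follow routinely from general position and the fact that $p,q$ are interior to $\triangle abc$, so there is no real obstacle. In every case $P$ contains a $4$-hole, proving the theorem.
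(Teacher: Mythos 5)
Your proof is correct, but note that the paper does not actually prove this statement: it is quoted as Klein's classical result with only a citation to Erd\H{o}s--Szekeres, so there is no in-paper argument to compare against. Your case analysis on the hull size $h\in\{3,4,5\}$ is the standard proof of the empty version of Klein's theorem, and all three cases go through. One small point in the $h=3$ case: convexity of $\{b,p,q,c\}$ does not follow from the separating property of $\ell$ alone --- that property rules out $p\in\triangle bqc$ and $q\in\triangle bpc$ (each of those triangles meets $\ell$ in a single point), but to rule out $b\in\triangle pqc$ and $c\in\triangle pqb$ you need the additional fact that $b$ and $c$ are vertices of $\CH{P}$ (equivalently, that $\triangle pqc\subseteq\triangle abc$ and $b$ is a vertex of $\triangle abc$). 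You gesture at this by invoking general position and the interiority of $p,q$, and the fix is one line, so I would not call it a gap; just make the hull-vertex argument explicit if you write this up.
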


\begin{theorem}[Sakai and Urrutia~\cite{Sakai2007}]
\label{Sakai-thr}
Every 7-set contains at least two compatible 4-holes.
\end{theorem}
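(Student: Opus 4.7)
The plan is to proceed by case analysis on $h = |\CH{P}|$, which ranges from $3$ to $7$. In each case I want to exhibit a line or chord that splits $P$ (or a subset of it) into two parts, each of which either already is, or contains, a $4$-hole; because the two parts lie on opposite sides of the separator, the resulting $4$-holes automatically have disjoint interiors and are therefore compatible.

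The easy cases are $h=6$ and $h=7$. When $h=7$ the points are in convex position, so the two quadrilaterals $p_1p_2p_3p_4$ and $p_4p_5p_6p_7$ (in the convex-hull order) are $4$-holes sharing only the vertex $p_4$, which is exactly compatibility. When $h=6$, let $p$ be the unique interior point. A standard rotation argument gives a line through $p$ that puts three hull vertices on each side (as the line rotates, the count changes by $\pm 1$); $p$ together with three consecutive hull vertices is in convex position and empty, so the two resulting quadrilaterals are the required compatible $4$-holes, sharing only $p$.

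For $h\in\{3,4,5\}$ the idea is to invoke Klein's theorem (Theorem~\ref{Klein-thr}) on a carefully chosen $5$-point subset to produce a first $4$-hole $Q_1$, and then show that the remaining two points of $P$, combined with two suitable vertices of $Q_1$, form a second $4$-hole on the opposite side of an edge or diagonal of $Q_1$. Concretely, I would fix a canonical separator: when $h=5$ I would use the line through the two interior points; when $h=4$ I would use a diagonal of the hull quadrilateral chosen so that each of the two triangles it creates contains a controlled number of interior points; when $h=3$ I would use the rotating half-line $\rotatingl{a}{b}{c}$ from the preliminaries, anchored at a hull vertex, to peel off a first $4$-hole and bound what remains on the other side. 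In each subcase, once the first $4$-hole has been located, it is necessary to check that at least five points of $P$ lie on or outside one bounding edge of $Q_1$, so that Klein's theorem applies again and yields $Q_2$ with $\mathrm{int}(Q_2)\cap\mathrm{int}(Q_1)=\emptyset$.

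The main obstacle is the case $h=3$, because all four interior points are unconstrained and the convex hull of the interior points can itself be a triangle (with one point strictly inside) or a quadrilateral, each splitting into several subcases depending on which hull vertex of $P$ the interior quadrilateral ``faces.'' In that case it is not obvious that a single separating chord can simultaneously support two $4$-holes, and I expect to need the attack-point machinery of $\rotatingl{a}{b}{c}$ to choose the chord so that one side has exactly the four points of a $4$-hole while the other side retains at least five points in general position. Once this is handled, the remaining subcases of $h=4$ and $h=5$ are analogous but strictly easier, and the theorem follows.
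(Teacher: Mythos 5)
First, a point of reference: the paper does not prove this statement at all --- Theorem~\ref{Sakai-thr} is imported verbatim from Sakai and Urrutia~\cite{Sakai2007} and used as a black box. So your attempt cannot be matched against an internal proof; it has to stand on its own.

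As it stands, it does not. Your cases $h=6$ and $h=7$ are genuinely complete and correct (the rotation argument for $h=6$ works: the interior point $p$ lies on the splitting line while each triple of consecutive hull vertices lies strictly on one side, so each quadruple is in convex position, is empty, and the two interiors sit in opposite open half-planes). But for $h\in\{3,4,5\}$ you have only named candidate separators, not verified that they work, and for $h=3$ you explicitly concede that you do not see how to finish. That is the bulk of the theorem: a 7-set with a triangular hull has four interior points in essentially arbitrary position, and this is where the real case analysis lives. A second, quieter gap affects even the cases you sketch: when you apply Klein's theorem to a chosen 5-point subset to get $Q_1$, the resulting quadrilateral is empty only with respect to those five points; the remaining two points of $P$ may lie inside it unless your separator already confines them to the other side. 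You flag this requirement for $Q_2$ but not for $Q_1$, and for $h=5$ (say, applying Klein to the five hull vertices) it visibly fails without extra care. So the proposal is a plan with the hardest half missing, not a proof.

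If you want to complete it within the toolkit of this paper rather than reproducing Sakai--Urrutia's argument, a more uniform route is the one used for Theorems~\ref{9-thr} and~\ref{11-thr}: take the bottommost point $p_0$, the radial ordering $p_1,\dots,p_6$, and the length-four signature sequence $\delta=s(p_2)s(p_3)s(p_4)s(p_5)$, then case on $m(\delta)$ using Observations~\ref{even-plus-obs} and~\ref{odd-minus-obs} and Lemma~\ref{even-minus-lemma}. Most values of $m(\delta)$ fall out immediately (e.g.\ $++++$ by Observation~\ref{even-plus-obs}; any $\delta$ containing an odd minus block and a length-two plus block by combining the two observations), leaving only a handful of sequences such as $+--+$ to handle by Lemma~\ref{even-minus-lemma} or by hand. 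That replaces your convex-hull-size case split, in which the hull-triangle case is the unstructured bottleneck, with a case split that the paper's machinery was built to digest.
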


As a warm-up exercise, we show that the number of 4-holes in an $n$-set $P$ is at least $\lfloor (n-2)/3 \rfloor$. Let $p_0$ be the bottommost point of $P$ and let $p_1,\dots,p_{n-1}$ be the radial ordering of the other points of $P$ around $p_0$. Consider $\lfloor (n-2)/3 \rfloor$ cones $\cone{p_0}{p_1}{p_4}, \cone{p_0}{p_4}{p_7}, \cone{p_0}{p_7}{p_{10}},\dots$ where each cone has three points of $P$ (including $p_0$) on its boundary and two other points in its interior. See Figure~\ref{radial-fig}(a). Each cone contains five points (including the three points on its boundary), and by Theorem~\ref{Klein-thr} these five points introduce a 4-hole. Since the interiors of these cones are pairwise disjoint, we get $\lfloor (n-2)/3 \rfloor$ compatible 4-holes in $P$. We can improve this bound as follows. By defining the cones as $\cone{p_0}{p_1}{p_6}, \cone{p_0}{p_6}{p_{11}}, \cone{p_0}{p_{11}}{p_{16}},\dots$, we get $\lfloor (n-2)/5 \rfloor$ cones, each of which contains seven points. By Theorem~\ref{Sakai-thr}, the seven points in each cone introduce  two compatible 4-holes, and thus, we get $2\cdot\lfloor (n-2)/5 \rfloor$ compatible 4-holes in total. Intuitively, any improvement on the lower bound for small point sets carries over to large point sets.

\begin{lemma}
\label{six-lemma}
Any 6-set, that has five or six vertices on the boundary of its convex hull, contains two compatible 4-holes.
\end{lemma}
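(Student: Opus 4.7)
The plan is to split on $|\CH{P}|$. When $|\CH{P}|=6$, the set $P$ is in convex position, and I would pick any long diagonal, for instance $p_1p_4$ with $p_1,\dots,p_6$ the hull vertices in counterclockwise order, which splits the hexagon into two convex $4$-holes $p_1p_2p_3p_4$ and $p_1p_4p_5p_6$ that meet only along $p_1p_4$ and are therefore compatible.

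The interesting case is $|\CH{P}|=5$. Label the hull vertices $p_1,\dots,p_5$ in counterclockwise order and let $q$ denote the unique interior point. For each $i\in\mathbb{Z}/5\mathbb{Z}$ I consider the candidate quadrilateral $H_i=qp_ip_{i+1}p_{i+2}$ and the ear triangle $E_i=p_{i-1}p_ip_{i+1}$. The strategy is to locate an index $i$ such that both $H_i$ and $H_{i+2}$ are $4$-holes; since those quadrilaterals share only the edge $qp_{i+2}$, they will then be automatically compatible.

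A short inspection shows that $H_i$ is a $4$-hole iff $q\notin E_{i+1}$: this condition puts $q$ on the opposite side of the diagonal $p_ip_{i+2}$ from $p_{i+1}$, so $qp_ip_{i+1}p_{i+2}$ is convex, and it also makes $p_{i+3},p_{i+4}$ extreme vertices of $\CH{\{q,p_i,p_{i+2},p_{i+3},p_{i+4}\}}$ and hence not interior to $H_i$. The task reduces to finding an $i$ with both $q\notin E_{i+1}$ and $q\notin E_{i+3}$.

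The main obstacle is ruling out the bad case, which I would handle by a combinatorial-geometric count. Let $S=\{j\in\mathbb{Z}/5\mathbb{Z}:q\in E_j\}$. If no good $i$ exists, then $S\cap\{j,j+2\}\neq\emptyset$ for every $j$, so $S$ is a vertex cover of the graph on $\mathbb{Z}/5\mathbb{Z}$ whose edges are the pairs $\{j,j+2\}$; since $\gcd(2,5)=1$, this graph is a $5$-cycle, whose minimum vertex cover has size $3$, forcing $|S|\geq 3$. Geometrically, however, $|S|\leq 2$: two adjacent ears $E_j,E_{j+1}$ may share interior (their overlap is the triangle cut off by the pentagon edge $p_jp_{j+1}$ and pieces of the two crossing diagonals $p_{j-1}p_{j+1}$ and $p_jp_{j+2}$), but two non-adjacent ears share only a single pentagon vertex, since at that shared vertex the angular wedges of the two ears are disjoint in the cyclic order of rays forced by the convexity of the pentagon. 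Every triple of ears contains a non-adjacent pair (the ear-adjacency graph is the triangle-free cycle $C_5$), so three ears share only a pentagon vertex, which $q$ is not. This contradicts $|S|\geq 3$ and produces the desired pair of compatible $4$-holes.
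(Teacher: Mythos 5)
Your proof is correct and reaches the lemma by essentially the same route as the paper: in the pentagon case the paper also builds two empty convex quadrilaterals sharing the segment from the interior point to a hull vertex, obtained by locating the interior point in one of the five triangles $\bigtriangleup p_ip_{i+1}p_{i+3}$, and such a triangle is exactly the part of the pentagon outside the two ears $E_{i+2}\cup E_{i+4}$ in your notation. The only real difference is that the paper simply asserts that these five triangles cover the pentagon, whereas your vertex-cover count on $C_5$ (at most two ears can contain $q$, while failure would force $q$ into three) actually proves that covering claim, so your write-up is, if anything, more complete on that step.
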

\begin{proof}
Let $P=\{p_0,\dots,p_5\}$ be a 6-set with five or six vertices on $\CH{P}$. If $\CH{P}$ has six vertices, then $P$ is in convex position, and thus, contains two compatible 4-holes. Assume that $\CH{P}$ has five vertices. Also, without loss of generality, assume that $p_5$ is in the interior of $\CH{P}$ and that $p_0,\dots, p_4$ is the clockwise order of the vertices of $\CH{P}$. Consider five triangles $\bigtriangleup p_ip_{i+1}p_{i+3}$ with $i\in\{0,\dots, 4\}$; all indices are modulo 5. The union of these triangles cover the interior of $\CH{P}$. Thus, $p_5$ lies in a triangle $\bigtriangleup p_ip_{i+1}p_{i+3}$ for some $i\in\{0,\dots, 4\}$. Therefore, the two quadrilaterals $p_5p_{i+1}p_{i+2}p_{i+3}$ and $p_5p_{i+3}p_{i+4}p_{i}$ are empty, convex, and internally disjoint.
\end{proof}

\section{Compatible 4-holes in 9-sets and 11-sets}
\label{9-11-section}

In this section we provide lower bounds on the number of compatible 4-holes in 9-sets and 11-sets. In Subsection~\ref{9-section} we prove that every 9-set contains at least three compatible 4-holes. In Subsection~\ref{11-section} we prove that every 11-set contains at least four compatible 4-holes. Both of these lower bounds match the upper bounds given in Proposition~\ref{upper-bound-prop}. Due to the nature of this type of problems, our proofs involve case analysis. The case analysis gets more complicated as the number of points increases. 
To simplify the case analysis, we use two observations and a lemma, that are given in Subsection~\ref{obs-lemma-section}, to find 4-holes. To simplify the case analysis further, we prove our claim for 9-sets first, then we use this result to obtain the proof for 11-sets. In this section we may use the term ``quadrilateral'' instead of 4-hole.

Let $P$ be an $n$-set. Let $p_0$ be the bottommost point of $P$ and let $p_1,\dots, p_{n-1}$ be the radial ordering of the other points of $P$ around $p_0$. For each point $p_i$, with $i\in\{2,\dots, n-2\}$, we define the {\em signature} $s(p_i)$ of $p_i$ to be ``$+$'' if, in the quadrilateral $p_0p_{i-1} p_i p_{i+1}$, the inner angle at $p_i$ is greater than $\pi$, and ``$-$'' otherwise; see Figure~\ref{radial-fig}(a). We refer to $s(p_2)s(p_3)\dots s(p_{n-2})$ as the {\em signature sequence} of $P$ with respect to $p_0$. We refer to $s(p_{n-2})\dots s(p_3)s(p_2)$ as the {\em reverse} of $s(p_2)s(p_3)\dots s(p_{n-2})$. A {\em minus subsequence} is a subsequence of $-$ signs in a signature sequence. A {\em plus subsequence} is defined analogously. For a given signature sequence $\delta$, we denote by $m(\delta)$, the number of minus signs in $\delta$.

\begin{figure}[htb]
  \centering
\setlength{\tabcolsep}{0in}
  $\begin{tabular}{cc}
 \multicolumn{1}{m{.5\columnwidth}}{\centering\includegraphics[width=.28\columnwidth]{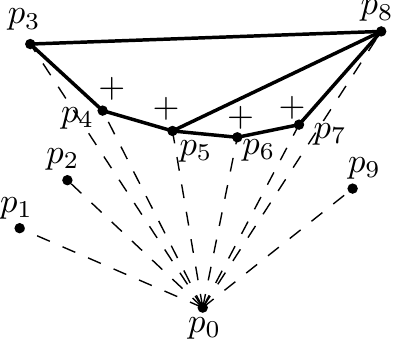}}
&\multicolumn{1}{m{.5\columnwidth}}{\centering\includegraphics[width=.32\columnwidth]{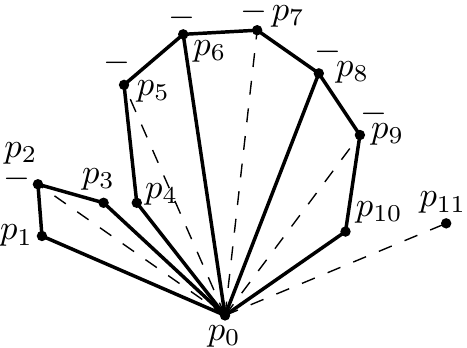}}
\\
(a) & (b)
\end{tabular}$
  \caption{(a) A plus subsequence $s(p_4)s(p_5)s(p_6)s(p_7)$ of length four. (b) Two minus subsequences $s(p_2)$ and $s(p_5)\dots s(p_9)$ of lengths one and five.}
\label{signature-fig}
\end{figure}

\subsection{Two observations and a lemma}
\label{obs-lemma-section}
In this section we introduce two observations and a lemma to simplify some case analysis in our proofs, which come later. Notice that if $s(p_{i})\dots s(p_{j})$ is a plus subsequence, then the points $p_{i-1},p_i,\dots,p_j,p_{j+1}$ are in convex position and the interior of their convex hull does not contain any point of $P$. Also, if $s(p_{i})\dots s(p_{j})$ is a minus subsequence, then the points $p_0,p_{i-1},p_i,\dots,p_j,p_{j+1}$ are in convex position and the interior of their convex hull does not contain any point of $P$. Therefore, the following two observations are valid.

\begin{observation}
\label{even-plus-obs}
Let $s(p_i)\dots s(p_j)$ be a plus subsequence of length $2k$, with $k\geqslant 1$. Then, the convex hull of $p_{i-1},\dots, p_{j+1}$ can be partitioned into $k$ compatible 4-holes. See Figure~\ref{signature-fig}$($a$)$.
\end{observation}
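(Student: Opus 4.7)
The plan is to split the argument into two easy steps, and essentially nothing else is needed. First, I would quote the geometric fact recorded in the paragraph immediately preceding the observation: since $s(p_i)\dots s(p_j)$ is a plus subsequence, the $j-i+3$ points $p_{i-1}, p_i, \dots, p_j, p_{j+1}$ lie in convex position and no other point of $P$ lies in the interior of their convex hull. Because the subsequence has length $2k$, this supplies an empty convex polygon with exactly $2k+2$ vertices, all belonging to $P$.

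Second, I would exhibit an explicit partition of this convex $(2k+2)$-gon into $k$ quadrilaterals by fanning from one vertex through every second vertex. Taking $p_{i-1}$ as the apex, I would draw the $k-1$ diagonals
\[
p_{i-1}p_{i+2},\; p_{i-1}p_{i+4},\; \dots,\; p_{i-1}p_{i+2(k-1)} \;=\; p_{i-1}p_{j-1},
\]
which chop the polygon into the $k$ quadrilaterals
\[
Q_t \;=\; p_{i-1}\,p_{i+2t-2}\,p_{i+2t-1}\,p_{i+2t}, \qquad t=1,\dots,k.
\]
A quick index check shows $Q_1=p_{i-1}p_ip_{i+1}p_{i+2}$ and $Q_k=p_{i-1}p_{j-1}p_jp_{j+1}$, so the $Q_t$ cover every vertex on the boundary. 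This is exactly the fan quadrangulation drawn in Figure~\ref{signature-fig}(a).

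To finish, I would observe that each $Q_t$ is convex because it is inscribed in a convex polygon, and it is a 4-hole because the enclosing convex hull is empty of points of $P$; furthermore, consecutive $Q_t$ share only a fan diagonal, so the quadrilaterals are pairwise interior-disjoint and hence compatible. Together they tile $\CH{p_{i-1},\dots,p_{j+1}}$, giving the desired $k$ compatible 4-holes.

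There is no genuine obstacle here: once the empty-convex-polygon statement from the preceding paragraph is invoked, the observation reduces to the elementary fact that every convex polygon with an even number of vertices admits a fan quadrangulation. The only thing I would be careful about is the index bookkeeping to make sure the $k-1$ diagonals and $k$ quadrilaterals really exhaust the list $p_{i-1},\dots,p_{j+1}$, which the verification above confirms.
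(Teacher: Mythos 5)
Your proposal is correct and follows essentially the same route as the paper: the paper justifies this observation solely by the preceding paragraph's remark that $p_{i-1},\dots,p_{j+1}$ are in convex position with an empty convex hull, leaving the quadrangulation of the resulting empty convex $(2k+2)$-gon implicit (via the figure), and your fan decomposition from $p_{i-1}$ is just an explicit, correctly indexed instance of that step.
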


\begin{observation}
\label{odd-minus-obs}
Let $s(p_i)\dots s(p_j)$ be a minus subsequence of length $2k+1$, with $k\geqslant 0$. Then, the convex hull of $p_{0},p_{i-1},\dots, p_{j+1}$ can be partitioned into $k+1$ compatible 4-holes. See Figure~\ref{signature-fig}$($b$)$.
\end{observation}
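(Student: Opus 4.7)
The observation is essentially a direct consequence of the remark stated just before it, which asserts that whenever $s(p_i)\dots s(p_j)$ is a minus subsequence, the points $p_0,p_{i-1},p_i,\dots,p_j,p_{j+1}$ lie in convex position and their convex hull contains no point of $P$ in its interior. So the first step in my plan is simply to invoke that remark. After doing so, the task reduces to showing that this particular empty convex polygon, which has $1+(j-i+3)=2k+4$ vertices, can be partitioned into $k+1$ empty compatible 4-holes.

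The second and main step is to exhibit an explicit fan-quadrangulation from $p_0$. Using $j=i+2k$, I would draw the $k$ chords
\[
p_0p_{i+1},\ p_0p_{i+3},\ \dots,\ p_0p_{j-1}
\]
to alternating vertices. Together with the boundary edges $p_0p_{i-1}$ and $p_0p_{j+1}$, these give $k+2$ edges radiating from $p_0$, splitting the $(2k{+}4)$-gon into the $k+1$ consecutive quadrilaterals
\[
p_0p_{i-1}p_ip_{i+1},\ p_0p_{i+1}p_{i+2}p_{i+3},\ \dots,\ p_0p_{j-3}p_{j-2}p_{j-1},\ p_0p_{j-1}p_jp_{j+1}.
\]
Each such piece is a sub-polygon of a convex polygon, so convex; each lies inside the empty convex hull given by the preceding remark, so its interior avoids $P$; and pieces meet only along chords emanating from $p_0$, so their interiors are pairwise disjoint. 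This exhibits $k+1$ pairwise compatible 4-holes whose union is the convex hull of $\{p_0,p_{i-1},\dots,p_{j+1}\}$, as required.

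I do not expect any real obstacle here: the signature condition has already done all the geometric work by guaranteeing both convex position and emptiness, and the remainder is the standard combinatorial fact that a convex $2m$-gon admits a fan-quadrangulation from any fixed vertex into $m-1$ quadrilaterals (here $m=k+2$). The decomposition is precisely the one depicted in Figure~\ref{signature-fig}(b).
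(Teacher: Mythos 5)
Your proposal is correct and matches the paper's intended argument: the paper justifies this observation by the remark immediately preceding it (that a minus subsequence forces $p_0,p_{i-1},\dots,p_{j+1}$ into empty convex position) together with the fan decomposition from $p_0$ depicted in Figure~\ref{signature-fig}(b), which is exactly the chord structure $p_0p_{i+1},p_0p_{i+3},\dots,p_0p_{j-1}$ you describe. You have simply made explicit the vertex count and the pairwise-disjointness of the fan pieces, which the paper leaves to the figure.
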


\begin{lemma}
\label{even-minus-lemma}
Let $s(p_{i+1})s(p_{i+2})\dots s(p_{i+2k})$ be a minus subsequence of length $2k$, with $k\geqslant 1$, and let $p_i$ and $p_{i+2k+1}$ have $+$ signatures. Then, one can find $k+1$ compatible 4-holes in the convex hull of $p_{0},p_{i-1},\dots,\allowbreak p_{i+2k+2}$.
\end{lemma}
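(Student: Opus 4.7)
The plan is to produce the $k{+}1$ compatible 4-holes in two groups: $k$ of them from the convex chain forced by the minus subsequence, and one extra 4-hole constructed using the ``$+$'' endpoints.

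A preliminary observation simplifies the task. Let $Q = \{p_0, p_{i-1}, p_i, \ldots, p_{i+2k+1}, p_{i+2k+2}\}$. Then $\CH{Q} \subseteq \cone{p_0}{p_{i-1}}{p_{i+2k+2}}$, and by the definition of the radial ordering every point of $P\setminus\{p_0\}$ in this cone is some $p_j$ with $i-1 \leq j \leq i+2k+2$, hence is already in $Q$. Therefore $\CH{Q} \cap P = Q$, and any empty convex quadrilateral with vertices in $Q$ is automatically a 4-hole of $P$.

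For the first $k$ 4-holes, I would apply the observation preceding the lemma to the minus subsequence $s(p_{i+1})\cdots s(p_{i+2k})$: this gives $p_0,p_i,p_{i+1},\ldots,p_{i+2k+1}$ in convex position with empty convex hull. Dropping $p_0$, the convex $(2k{+}2)$-gon $R=p_ip_{i+1}\cdots p_{i+2k+1}$ is also empty, and its fan quadrangulation from $p_i$, namely $p_ip_{i+1}p_{i+2}p_{i+3}$, $p_ip_{i+3}p_{i+4}p_{i+5}$, $\ldots$, $p_ip_{i+2k-1}p_{i+2k}p_{i+2k+1}$, provides $k$ pairwise compatible 4-holes.

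The extra $(k{+}1)$-th 4-hole lives in the pocket $\CH{Q}\setminus R$. The ``$+$'' signatures place $p_i \in \triangle p_0 p_{i-1} p_{i+1}$ and $p_{i+2k+1} \in \triangle p_0 p_{i+2k} p_{i+2k+2}$, so both ``$+$'' points lie inside this pocket. By case analysis on which side of the chord $p_{i-1}p_{i+2k+2}$ each of $p_i, p_{i+2k+1}$ lies, I would argue that at least one of the three quadrilaterals
\[
p_0\,p_{i-1}\,p_i\,p_{i+2k+2}, \qquad p_0\,p_{i-1}\,p_{i+2k+1}\,p_{i+2k+2}, \qquad p_{i-1}\,p_i\,p_{i+2k+1}\,p_{i+2k+2}
\]
is convex, and hence, by the preliminary observation, is an empty 4-hole of $P$. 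Compatibility with the $k$ 4-holes from $R$ is ensured because the chosen candidate shares at most the vertices $p_i, p_{i+2k+1}$ with $R$ and otherwise lies on the $p_0$-side of the relevant chord of $R$.

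The main obstacle will be verifying convexity and disjointness from $R$ in the most delicate subcase, where both $p_i$ and $p_{i+2k+1}$ lie inside $\triangle p_0 p_{i-1} p_{i+2k+2}$, forcing the third candidate $p_{i-1}\,p_i\,p_{i+2k+1}\,p_{i+2k+2}$. A careful planar-geometry argument using both ``$+$'' signatures together with the angular ordering $p_{i-1},p_i,p_{i+2k+1},p_{i+2k+2}$ around $p_0$ should handle this case.
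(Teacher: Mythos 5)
Your preliminary observation and your first $k$ quadrilaterals (a quadrangulation of the empty convex polygon $R=p_ip_{i+1}\cdots p_{i+2k+1}$) are fine, but the construction of the $(k{+}1)$-st 4-hole has a genuine gap, and it is not a repairable detail: the whole strategy of fixing a quadrangulation of $R$ in advance and then finding one additional 4-hole in the pocket $\CH{Q}\setminus R$ using only the five vertices $p_0,p_{i-1},p_i,p_{i+2k+1},p_{i+2k+2}$ can fail. Note first that ``convex'' does not give you ``compatible'': the $+$ signatures of $p_i$ and $p_{i+2k+1}$ say nothing about where $p_{i-1}$ and $p_{i+2k+2}$ lie relative to the edge $p_ip_{i+2k+1}$ of $R$, so your third candidate need not lie on the $p_0$-side of that edge. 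A concrete counterexample with $k=1$, $i=2$: take $p_0=(0,0)$, $p_1=(-3,2.9)$, $p_2=(-2,2)$, $p_3=(-1,3)$, $p_4=(1,3)$, $p_5=(2,2)$, $p_6=(3,2.9)$. The radial order around $p_0$ is $p_1,\dots,p_6$ and the signature pattern $s(p_2)\dots s(p_5)$ is $+--+$ as required. Here $p_0p_1p_2p_6$ and $p_0p_1p_5p_6$ are non-convex ($p_2$, resp.\ $p_5$, lies inside the triangle $p_0p_1p_6$), while $p_1p_2p_5p_6$ is convex and empty but its interior meets the interior of $R=p_2p_3p_4p_5$ (both contain $(0,2.4)$), so it is incompatible with any quadrangulation of $R$. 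None of your three candidates works; neither do the two remaining 4-subsets $p_0p_1p_2p_5$ and $p_0p_2p_5p_6$, which are also non-convex here.

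What the paper does instead is essential to the argument: it introduces the chords $l_{i+j}$ through $p_{i+j}$ and $p_{i+2k+1-j}$ and locates $p_{i-1}$ and $p_{i+2k+2}$ relative to this family. When $p_{i-1}$ lies above $l_i$ but below some $l_{i+j}$ (as in the example, where the extra 4-hole is $p_1p_3p_4p_2$), the additional quadrilateral $p_{i-1}p_{i+j}p_{i+2k+1-j}p_{i+j-1}$ genuinely overlaps $R$, and the remaining $k$ quadrilaterals come from re-quadrangulating the two convex pieces that are left: the part of $R$ beyond $l_{i+j}$, and the convex polygon $p_0,p_i,\dots,p_{i+j-1},p_{i+2k+1-j},\dots,p_{i+2k+1}$. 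In other words, the quadrangulation of $R$ must be chosen \emph{after} you know where $p_{i-1}$ and $p_{i+2k+2}$ fall; it cannot be the fan from $p_i$ fixed up front. To close the gap you would need to adopt this case analysis (or an equivalent one), at which point you are reproducing the paper's proof.
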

\begin{proof}
Refer to Figure~\ref{even-minus-fig}. For every $j\in\{0,\dots, k\}$ let $l_{i+j}$ be the line through $p_{i+j}$ and $p_{i+2k+1-j}$. These lines might intersect each other, but, for a better understanding of this proof, we visualized them as parallel lines in Figure~\ref{even-minus-fig}. 

\begin{figure}[htb]
  \centering
\setlength{\tabcolsep}{0in}
  $\begin{tabular}{cc}
 \multicolumn{1}{m{.5\columnwidth}}{\centering\includegraphics[width=.45\columnwidth]{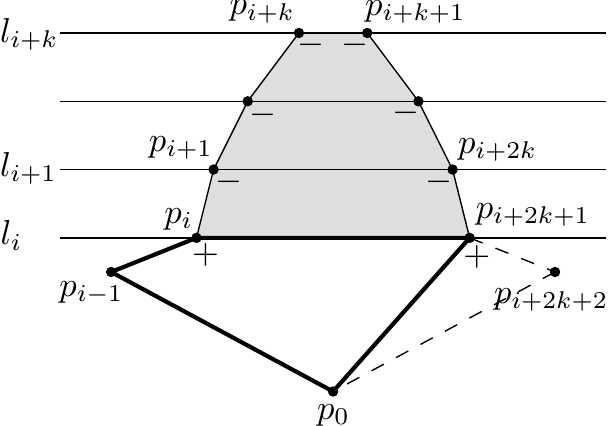}}
&\multicolumn{1}{m{.5\columnwidth}}{\centering\includegraphics[width=.45\columnwidth]{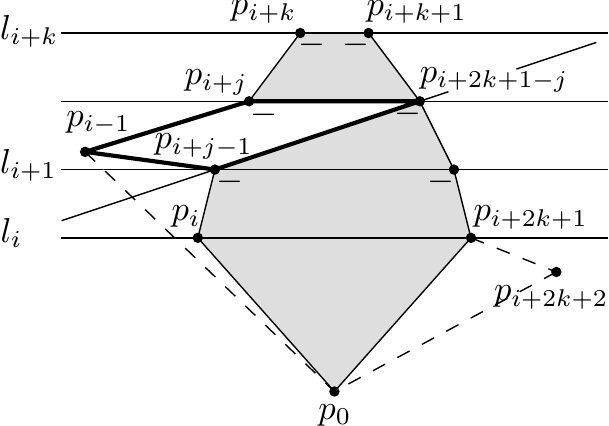}}
\\
(a) & (b)
\end{tabular}$
  \caption{The point $p_{i-1}$ is (a) below $l_{i}$, and (b) below $l_{i+j}$ and above all lines $l_i,\dots, l_{i+j-1}$.}
\label{even-minus-fig}
\end{figure}

Notice that the points $p_0, p_i,\dots, p_{i+2k+1}$ are in convex position. If $p_{i-1}$ is below $l_{i}$, then we get a 4-hole $p_0p_{i-1}p_{i}p_{i+2k+1}$ and $k$ other compatible 4-holes in the convex hull of the points $p_i,\dots,p_{i+2k+1}$; see Figure~\ref{even-minus-fig}(a). Assume $p_{i-1}$ is above $l_{i}$. If $p_{i-1}$ is below some lines in the sequence $l_{i+1},\dots,l_{i+k}$, then let $l_{i+j}$ be the first one in this sequence, that is, $p_{i-1}$ is below $l_{i+j}$ but above all lines $l_i,\dots, l_{i+j-1}$. Notice that in this case $p_{i-1}$ is also above the line through $p_{i+j-1}$ and $p_{i+2k+1-j}$. In this case we get a 4-hole $p_{i-1}p_{i+j}p_{i+2k+1-j}p_{i+j-1}$, and $k-j$ compatible 4-holes in the convex hull of $p_{i+j}\dots,p_{i+2k+1-j}$, and $j$ compatible 4-holes in the convex hull of $p_0,p_i,\dots,p_{i+j-1},p_{i+2k+1-j},\dots,p_{i+2k+1}$; see Figure~\ref{even-minus-fig}(b). Thus, we get $k+1$ compatible 4-holes in total. Similarly, if $p_{i+2k+2}$ is below one of the lines $l_{i+j}$ for $j\in\{0,\dots,k\}$ we get $k+1$ compatible 4-holes. Thus, assume that both $p_{i-1}$ and $p_{i+2k+2}$ are above all lines $l_i,\dots,l_{i+k}$. In this case we get a 4-hole $p_{i-1}p_{i+2k+2}p_{i+k+1}p_{i+k}$ and $k$ other compatible 4-holes in the convex hull of $p_i,\dots, p_{i+2k+1}$. Thus, we get $k+1$ compatible 4-holes in total. 
\end{proof}

Notice that the statement of Lemma~\ref{even-minus-lemma} is true regardless of the signatures of $p_i$ and $p_{i+2k+1}$. However, in this paper, when we apply this lemma, $p_i$ and $p_{i+2k+1}$ have $+$ signatures. 

Quadrilaterals obtained by Observations~\ref{even-plus-obs} and \ref{odd-minus-obs} do not overlap because quadrilaterals obtained by Observation~\ref{even-plus-obs} lie above the chain $p_1,\dots,p_{n-1}$ while quadrilaterals obtained by Observation~\ref{odd-minus-obs} lie below this chain. However, the quadrilaterals obtained in the proof of Lemma~\ref{even-minus-lemma} might lie above and/or below this chain. The quadrilaterals obtained by this lemma overlap the quadrilaterals obtained by Observations~\ref{even-plus-obs} or \ref{odd-minus-obs} in the following two cases:

\begin{itemize}
 \item Consider the first case in the proof of Lemma~\ref{even-minus-lemma} when $p_{i-1}$ lies below $l_{i}$ and we create the quadrilateral $p_0p_{i-1}p_{i}p_{i+2k+1}$. If $s(p_{i-1})$ belongs to a minus subsequence, and we apply Observation~\ref{odd-minus-obs} on it, then the quadrilateral $p_0p_{i-2}p_{i-1}p_i$ obtained by this observation overlaps the quadrilateral $p_0p_{i-1}p_{i}p_{i+2k+1}$. Similar issue may arise when $s(p_{i+2k+2})$ belongs to a minus subsequence. 
\item Consider the last two cases in the proof of Lemma~\ref{even-minus-lemma} when $p_{i-1}$ lies above $l_{i}$. If $s(p_{i-1})$ belongs to a plus subsequence, and we apply Observation~\ref{even-plus-obs} on it, then the quadrilaterals obtained by this observation might overlap either the quadrilateral $p_{i-1}p_{i+j}p_{i+2k+1-j}p_{i+j-1}$ or the quadrilateral $p_{i-1}p_{i+2k+2}p_{i+k+1}p_{i+k}$ that is obtained by  Lemma~\ref{even-minus-lemma}. Similar issue may arise when $s(p_{i+2k+2})$ belongs to a plus subsequence. 
\end{itemize}

As such, in our proofs, we keep track of the following two assertions when applying Lemma~\ref{even-minus-lemma} on a subsequence $s(p_{i+1})s(p_{i+2})\dots s(p_{i+2k})$:

\begin{enumerate}[leftmargin=3cm, label={\bfseries{Assertion {\arabic*}.}}]
 \item 
 Do not apply Observation~\ref{even-plus-obs} on a plus subsequence that contains $s(p_{i-1})$ or $s(p_{i+2k+2})$.
 \item Do not apply Observation~\ref{odd-minus-obs} on a minus subsequence that contains $s(p_{i-1})$ or $s(p_{i+2k+2})$. 
\end{enumerate}

\subsection{Three quadrilaterals in 9-sets}
\label{9-section}
In this section we prove our claim for 9-sets:

\begin{theorem}
\label{9-thr}
Every 9-set contains at least three compatible 4-holes.
\end{theorem}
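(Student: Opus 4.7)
Let $p_0$ be the bottommost point of $P$ and let $p_1,\ldots,p_8$ be the other eight points in radial order around $p_0$; consider the signature sequence $\delta=s(p_2)s(p_3)s(p_4)s(p_5)s(p_6)s(p_7)$, of length six. My approach is a case analysis on the pattern of $\delta$, driven by Observations~\ref{even-plus-obs} and~\ref{odd-minus-obs} and Lemma~\ref{even-minus-lemma}. Two extreme cases are immediate: if every sign of $\delta$ is $+$, then Observation~\ref{even-plus-obs} applied to the whole length-six sequence already yields three compatible 4-holes; if every sign is $-$, then all nine points of $P$ are in convex position and any convex partition into quadrilaterals produces $\lfloor 9/2\rfloor-1=3$ compatible 4-holes.

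For every other $\delta$ I would parse it into its maximal $+$ and $-$ blocks and apply the appropriate tool to each: Observation~\ref{even-plus-obs} to a $+$ block of length $\ell$ (yielding $\lfloor\ell/2\rfloor$ quadrilaterals); Observation~\ref{odd-minus-obs} to a $-$ block of odd length $2k+1$ (yielding $k+1$); and Lemma~\ref{even-minus-lemma} to a $-$ block of even length $2k$ that is flanked on both sides by $+$ signs in $\delta$ (yielding $k+1$). A short arithmetic confirms that this decomposition already reaches three quadrilaterals except in a small number of patterns where the odd-length $+$ blocks outnumber the odd-length $-$ blocks, the canonical example being the sign pattern ``$+-+--+$''. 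Compatibility is automatic in the typical case: the quadrilaterals of Observation~\ref{even-plus-obs} sit above the radial chain $p_1\cdots p_8$, while those of Observation~\ref{odd-minus-obs} sit below this chain and share only the vertex $p_0$.

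The two principal obstacles I expect are the following. First, when an even-length $-$ block sits at the very start or end of $\delta$, the flanking endpoint $p_1$ or $p_8$ has no signature and Lemma~\ref{even-minus-lemma} does not apply; I would then use Observation~\ref{odd-minus-obs} on an odd-length sub-subsequence of the block and recover the lost quadrilateral from the convex structure at the corresponding end of $\CH{P}$. Second, whenever Lemma~\ref{even-minus-lemma} is applied, Assertions~1 and~2 forbid applying Observation~\ref{even-plus-obs} or~\ref{odd-minus-obs} to a block that contains $s(p_{i-1})$ or $s(p_{i+2k+2})$; in the exceptional patterns where this would drop the count below three, I would split along the three geometric sub-cases in the proof of Lemma~\ref{even-minus-lemma} (determined by the positions of $p_{i-1}$ and $p_{i+2k+2}$ relative to the lines $l_{i+j}$) and argue in each sub-case that either the forbidden quadrilateral is after all compatible with the particular pair of Lemma quadrilaterals produced, or a replacement 4-hole can be extracted from the untouched points at the opposite end of $\delta$. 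This compatibility bookkeeping, carried out across the handful of exceptional length-six sign patterns, is where the bulk of the case-analysis effort lies.
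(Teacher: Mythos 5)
Your framework is exactly the paper's: the signature sequence $\delta=s(p_2)\cdots s(p_7)$, Observations~\ref{even-plus-obs} and~\ref{odd-minus-obs} applied to maximal blocks, Lemma~\ref{even-minus-lemma} for even minus blocks flanked by pluses, and the observation that the two kinds of quadrilaterals are automatically compatible because they lie on opposite sides of the chain $p_1\cdots p_8$. You also correctly isolate the exceptional patterns, and your plan does work for some of them: for $+--+++$ (and for $+--+--$, $+----+$) the paper does precisely what you describe, namely it branches on which sub-case of Lemma~\ref{even-minus-lemma} occurred and names a third quadrilateral ($p_1p_6p_7p_8$ versus $p_4p_6p_7p_8$) accordingly.

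The gap is in your canonical hard pattern $+-+--+$ (equivalently its reverse $+--+-+$), where your proposed repair --- reuse the three geometric sub-cases of Lemma~\ref{even-minus-lemma} and either certify the forbidden quadrilateral or pull a replacement ``from the untouched points at the opposite end'' --- is not enough. For that pattern the block decomposition caps out at two quadrilaterals (the singleton minus block is exactly the one Assertion~2 forbids), and the paper does not apply Lemma~\ref{even-minus-lemma} at all; instead it runs a fresh nested case analysis on the positions of $p_1$, then $p_6$, then $p_8$ relative to the lines $l_{2,5}$, $l_{3,4}$, $l_{5,6}$, $l_{5,7}$, producing in the deepest sub-case the triple $p_2p_3p_4p_5$, $p_2p_5p_6p_7$, $p_0p_2p_7p_8$. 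These quadrilaterals span across several signature blocks, use $p_1$ and $p_8$ (which lie outside the lemma's point range for the $--$ block), and interleave above and below the radial chain in a way that no combination of the lemma's sub-cases plus a single ``replacement at the opposite end'' will reproduce --- in that sub-case every point $p_1,\dots,p_8$ is already committed. So the missing ingredient is a genuinely new geometric argument for $+--+-+$, not bookkeeping around Lemma~\ref{even-minus-lemma}; without it the count stalls at two for that pattern and the theorem is not proved.
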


Let $P$ be a 9-set. Let $p_0$ be the bottommost point of $P$ and let $p_1,\dots,p_{8}$ be the radial
ordering of the other points of $P$ around $p_0$. Let $\delta$ be the signature sequence of $P$ with respect to $p_0$, i.e., $\delta=s(p_2)\dots s(p_6)s(p_7)$. Depending on the value of $m(\delta)$, i.e., the number of minus signs in $\delta$, we consider the following seven cases. Notice that any proof of this theorem for $\delta$ carries over to the reverse of $\delta$ as well. So, in the proof of this theorem, if we describe a solution for a signature sequence, we skip the description for its reverse. 

\begin{itemize}
 \item $m(\delta)=0$: In this case $\delta$ is a plus subsequence of length six. Our result follows by Observation~\ref{even-plus-obs}.
 \item $m(\delta)=1$: In this case $\delta$ has five plus signs. By Observation~\ref{odd-minus-obs}, we get a quadrilateral by the point with $-$ signature. If four of the plus signs are consecutive, then by Observation~\ref{even-plus-obs} we get two more quadrilaterals. Otherwise, $\delta$ has two disjoint subsequences of plus signs, each of length at least two. Again, by Observation~\ref{even-plus-obs} we get a quadrilateral for each of these subsequences. Therefore, in total we get three 4-holes; observe that these 4-holes are pairwise non-overlapping.
 \item $m(\delta)=2$: Notice that $\delta$ has a plus subsequence of length at least two. If the two minus signs are non-consecutive, then we get two quadrilaterals by Observation~\ref{odd-minus-obs} and one by Observation~\ref{even-plus-obs}. Assume the two minus signs are consecutive. If the four plus signs are consecutive or partitioned into two subsequences of lengths two, then we get two quadrilaterals by Observation~\ref{even-plus-obs} and one by Observation~\ref{odd-minus-obs}. The remaining sequences are $+--+++$ and $+++--+$, where the second sequence is the reverse of the first one. By splitting the first sequence as $+--+|++$ we get two quadrilaterals for the subsequence $+--+$, by Lemma~\ref{even-minus-lemma}. If in this lemma we land up in the last case where both $p_{i-1}$ and $p_{i+2k+2}$ are above $l_{i+k}$, then we get a third compatible quadrilateral $p_1p_6p_7p_8$, otherwise we get $p_4p_6p_7p_8$. Notice that Assertion 1 holds here.
 \item $m(\delta)=3$: If the three minus signs are pairwise non-consecutive, then we get three quadrilaterals by Observation~\ref{odd-minus-obs}. If the three minus signs are consecutive, then $\delta$ has a plus subsequence of length at least two. Thus, we get two quadrilaterals by Observation~\ref{odd-minus-obs} and one by Observation~\ref{even-plus-obs}. Assume the minus signs are partitioned into two disjoint subsequences of lengths one and two. Then, we get two quadrilaterals for the minus signs. If $\delta$ has a plus subsequence of length at least two, then we get a third quadrilateral by this subsequence. The remaining sequences are $+--+-+$ and its reverse. 

\begin{figure}[htb]
  \centering
\setlength{\tabcolsep}{0in}
  $\begin{tabular}{cc}
 \multicolumn{1}{m{.52\columnwidth}}{\centering\includegraphics[width=.45\columnwidth]{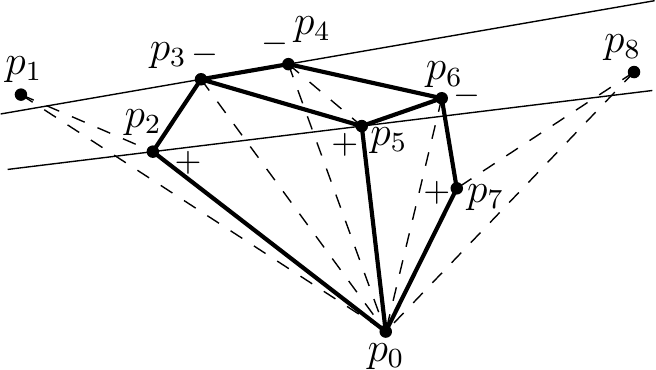}}
&\multicolumn{1}{m{.48\columnwidth}}{\centering\includegraphics[width=.42\columnwidth]{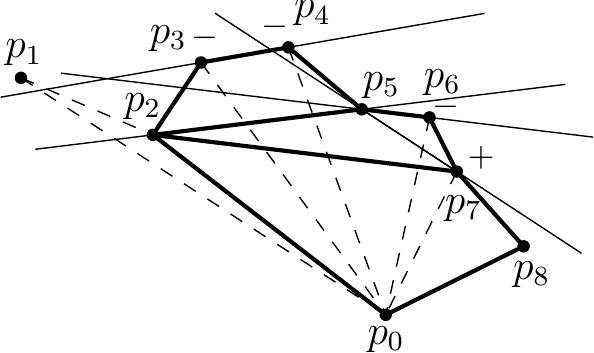}}
\\
(a) & (b)
\end{tabular}$
  \caption{Signature sequence $+--+-+$. (a) $p_1$ is above $l_{3,4}$, and $p_6$ is below $l_{3,4}$ and above $l_{2,5}$. (b) $p_1$ is above $l_{3,4}$, $p_6$ is below $l_{2,5}$, and $p_8$ is below $l_{5,7}$.}
\label{9-3m-fig}
\end{figure}

We show how to get three compatible 4-holes with the sequence $+--+-+$. See Figure~\ref{9-3m-fig}. First we look at $p_1$. If $p_1$ is below $l_{2,5}$ then the three quadrilaterals $p_0p_1p_2p_5$, $p_2p_3p_4p_5$, and $p_0p_5p_6p_7$ are compatible. Assume $p_1$ is above $l_{2,5}$. If $p_1$ is below $l_{3,4}$ then the quadrilaterals $p_1p_3p_4p_2$, $p_0p_2p_4p_5$, and $p_0p_5p_6p_7$ are compatible. Assume $p_1$ is above $l_{3,4}$. Now, we look at $p_6$. If $p_6$ is above $l_{3,4}$ then $p_1p_6p_4p_3$, $p_2p_3p_4p_5$, and $p_0p_5p_6p_7$ are compatible. If $p_6$ is below $l_{3,4}$ and above $l_{2,5}$ as in Figure~\ref{9-3m-fig}(a), then $p_0p_2p_3p_5$, $p_3p_4p_6p_5$, and $p_0p_5p_6p_7$ are compatible. Assume $p_6$ is below $l_{2,5}$ as in Figure~\ref{9-3m-fig}(b); consequently $p_7$ is also below $l_{2,5}$ because $p_6$ has $-$ signature. Since $p_5$ has $+$ signature, $p_4$ is above $l_{5,6}$. Now, we look at $p_8$. If $p_8$ is above $l_{5,6}$, then $p_4p_8p_6p_5$, $p_2p_3p_4p_5$, and $p_0p_5p_6p_7$ are compatible. If $p_8$ is below $l_{5,6}$ and above $l_{5,7}$, then $p_5p_6p_8p_7$, $p_2p_3p_4p_5$, and $p_0p_2p_5p_7$ are compatible. Assume $p_8$ is below $l_{5,7}$ as in Figure~\ref{9-3m-fig}(b). In this case $p_2p_3p_4p_5$, $p_2p_5p_6p_7$, and $p_0p_2p_7p_8$ are compatible. 

 \item $m(\delta)=4$: If the two plus signs in $\delta$ are consecutive, then we get one quadrilateral by Observation~\ref{even-plus-obs} and two by Observation~\ref{odd-minus-obs}. Assume the two plus signs are non-consecutive. If the minus signs are partitioned into three subsequences or two subsequences of lengths one and three, then we get three compatible 4-holes by Observation~\ref{odd-minus-obs}. The remaining sequences are $+----+$, $+--+--$ and its reverse. For the sequence $+----+$ we get three quadrilaterals by Lemma~\ref{even-minus-lemma}. The sequence $+--+--$ can be handled by splitting as $+--+|-|-$, where we get two quadrilaterals for the subsequence $+--+$, by Lemma~\ref{even-minus-lemma}, and one quadrilateral for the last minus sign, by Observation~\ref{even-plus-obs}. Notice that Assertion 2 holds here as we apply Observation~\ref{even-plus-obs} on the last minus sign. 

 \item $m(\delta)=5$: If the five minus signs are consecutive, then we get three compatible quadrilaterals by Observation~\ref{odd-minus-obs}. Otherwise, $\delta$ has two minus subsequences, one of which has size at least three. Again, by Observation~\ref{odd-minus-obs} we get three quadrilaterals with these two subsequences. 
 \item $m(\delta)=6$: The six minus signs are consecutive and our result follows by Observation~\ref{odd-minus-obs}.
\end{itemize}

\begin{wrapfigure}{r}{1.2in} 
\vspace{-10pt} 
\centering
\includegraphics[width=1in]{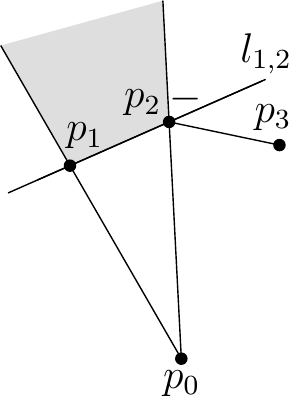} 
\end{wrapfigure}
This finishes the proof of Theorem~\ref{9-thr}. We will refer to the following remark in our proof for 11-sets.\\

\vspace{-5pt}
\noindent{\bf Remark 1:}
When $\delta$ starts with a $-$ sign, the proof of Theorem~\ref{9-thr} does not connect $p_1$ to any point above $l_{1,2}$. That is, in the cone $\cone{p_0}{p_1}{p_2}$, the region that is above $l_{1,2}$ (the shaded region in the figure to the right), is disjoint from the interiors of the three quadrilaterals obtained in the proof of this theorem. An analogous argument is valid when $\delta$ ends with a $-$ sign.

\subsection{Four quadrilaterals in 11-sets}
\label{11-section}

In this section we prove our claim for 11-sets:

\begin{theorem}
\label{11-thr}
Every 11-set contains at least four compatible 4-holes.
\end{theorem}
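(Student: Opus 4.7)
The plan is to mirror the proof strategy of Theorem~\ref{9-thr}: let $p_0$ be the bottommost point of the 11-set $P$, let $p_1,\dots,p_{10}$ be the radial ordering around $p_0$, and analyze the signature sequence $\delta=s(p_2)s(p_3)\dots s(p_9)$ of length $8$ by case analysis on $m(\delta)\in\{0,1,\dots,8\}$. For each value of $m(\delta)$ the aim is to decompose $\delta$ into plus and minus subsequences that can be processed by Observation~\ref{even-plus-obs} (even plus runs), Observation~\ref{odd-minus-obs} (odd minus runs), or Lemma~\ref{even-minus-lemma} (even minus runs flanked by pluses), extracting one 4-hole per ``unit'' of signature material. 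Assertions~1 and~2 will be respected throughout so that the Lemma~\ref{even-minus-lemma} quadrilaterals remain compatible with the others. The extreme cases fall out immediately: $m(\delta)=0$ yields four quadrilaterals via Observation~\ref{even-plus-obs}, while $m(\delta)=8$ puts $P$ in convex position and triangulating a convex 11-gon by parallel diagonals gives $\lfloor 11/2\rfloor-1=4$ compatible 4-holes.

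The central auxiliary move is a reduction to Theorem~\ref{9-thr}. Whenever $s(p_2)=-$, the quadrilateral $p_0p_1p_2p_3$ is automatically a 4-hole, and the 9-set $P'=\{p_0,p_3,p_4,\dots,p_{10}\}$ contains three compatible 4-holes by Theorem~\ref{9-thr}. Because every such 4-hole lies in the convex cone $\cone{p_0}{p_3}{p_{10}}$ while $p_1,p_2$ lie in the disjoint cone $\cone{p_0}{p_1}{p_2}$, the three quadrilaterals of $P'$ are 4-holes of $P$ and are compatible with $p_0p_1p_2p_3$, since the two cones share only the ray $p_0p_3$. A reverse-$\delta$ symmetric argument covers $s(p_9)=-$. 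Therefore the crux is the case $s(p_2)=s(p_9)=+$.

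In the crux case, the strategy is to locate a plus subsequence of length at least two adjacent to position~$2$ or position~$9$: if $s(p_2)s(p_3)=++$, Observation~\ref{even-plus-obs} peels off the 4-hole $p_1p_2p_3p_4$, and one then applies a 9-set argument to a complementary 9-subset such as $\{p_0,p_1,p_4,p_5,\dots,p_{10}\}$, invoking Remark~1 to guarantee that the three quadrilaterals obtained by Theorem~\ref{9-thr} leave the appropriate region above a diagonal unused so that $p_1p_2p_3p_4$ can be attached without overlap. Intermediate patterns of $m(\delta)$ that cannot be disposed of by a clean Observation/Lemma decomposition (the analogues of the $+--+-+$ pattern in the 9-set proof, which can appear in several positions within a length-8 word) will require explicit geometric case analysis based on which side of a few key lines $l_{i,j}$ certain points lie; these will be organized by reverse-$\delta$ symmetry to avoid duplicate work.

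The main obstacle I expect is twofold. First, verifying compatibility across quadrilaterals produced by different tools (Observations, Lemma~\ref{even-minus-lemma}, and Theorem~\ref{9-thr}) requires careful tracking of which sub-region of the plane relative to the chain $p_1,\dots,p_{10}$ each quadrilateral occupies, together with respecting Assertions~1 and~2 when Lemma~\ref{even-minus-lemma} is invoked next to plus or minus neighbors that are themselves being processed. Second, the ``hard'' patterns inherited from the 9-set proof reappear at several offsets inside a length-8 signature, and each appearance may demand its own short geometric subcase examining where $p_{i-1}$, $p_{i+2k+2}$, or an extremal point $p_1$ or $p_{10}$ falls with respect to a handful of diagonals; keeping this enumeration tight, rather than letting it blow up combinatorially, will be where most of the proof effort goes.
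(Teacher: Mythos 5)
Your overall strategy is the same as the paper's: analyze the signature sequence $\delta=s(p_2)\dots s(p_9)$, reduce to Theorem~\ref{9-thr} when $\delta$ begins or ends with $-$ (the paper also does this for the prefix $++-$ and suffix $-++$, using Remark~1), and then case-split on $m(\delta)$. However, there are two genuine problems. First, your crux-case move is flawed as stated. You propose to peel off $p_1p_2p_3p_4$ whenever $s(p_2)s(p_3)=++$ and recurse on a nine-point subset ``such as $\{p_0,p_1,p_4,\dots,p_{10}\}$''. That particular subset omits $p_2$ and $p_3$, which lie strictly inside the cone $\cone{p_0}{p_1}{p_{10}}$ spanned by the subset, so a 4-hole of the subset need not be empty in $P$; the paper instead removes $p_1,p_2$, which lie outside the cone containing the remaining nine points, so its holes survive. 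Moreover, the peeling only works when $s(p_4)=-$: Remark~1 protects the region above $l_{3,4}$ only when the reduced set's signature starts with $-$. If $s(p_4)=+$ (e.g.\ $\delta$ begins $++++$), the quadrilateral $p_1p_2p_3p_4$ can overlap a quadrilateral such as $p_3p_4p_5p_6$ produced inside the 9-subset---for points in convex position the diagonals $p_1p_4$ and $p_3p_6$ cross---which is exactly why Observation~\ref{even-plus-obs} pairs up an even plus run in a specific way rather than greedily. The paper accordingly restricts the reduction to the prefixes $-$ and $++-$ and keeps the $+++$ prefix inside the main case analysis.

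Second, and more importantly, the proposal defers essentially all of the actual work. After the reductions one may assume $\delta$ starts with $+-$ or $+++$ and ends with $-+$ or $+++$, and the cases $m(\delta)\in\{2,\dots,6\}$ still leave about a dozen signature patterns ($+--+-+++$, $+-+--+++$, $+--++--+$, $+--+-+-+$, $+-+--+-+$, $+-+----+$, $+--+---+$, and their relatives) that cannot be dispatched by Observations~\ref{even-plus-obs} and \ref{odd-minus-obs} or Lemma~\ref{even-minus-lemma} alone; each requires its own multi-step geometric argument tracking where $p_1$, $p_6$, $p_8$, $p_{10}$, etc.\ lie relative to specific lines $l_{i,j}$, sometimes ending in an appeal to Lemma~\ref{six-lemma}. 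Acknowledging that ``these will require explicit geometric case analysis'' is a plan, not a proof; in this Erd\H{o}s--Szekeres-type setting that case analysis is the theorem, and without it the argument is incomplete.
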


Let $P$ be an 11-set. Let $p_0$ be the bottommost point of $P$ and let $p_1,\dots,p_{10}$ be the radial
ordering of the other points of $P$ around $p_0$. Let $\delta=s(p_2)\dots s(p_9)$ be the signature sequence of $P$ with respect to $p_0$. Depending on the value of $m(\delta)$ we will have nine cases. As in the proof of Theorem~\ref{9-thr}, if we describe a solution for a signature sequence, we skip the description for its reverse. 

Assume $\delta$ starts with a $-$ signature. Let $P'=P\setminus\{p_1,p_2\}$, and notice that $P'$ has nine points. By Theorem~\ref{9-thr} we get three compatible quadrilaterals with points of $P'$. We get $p_0p_1p_2p_3$ as the fourth quadrilateral; notice that this quadrilateral does not overlap any of the quadrilaterals that are obtained from $P'$. Thus our result follows. Similarly, if $\delta$ ends with $-$, we get four compatible quadrilaterals. Therefore, in the rest of the proof we assume that $\delta$ starts and ends with plus signs. Because of this, we will not have the cases where $m(\delta)\in\{7,8\}$, and thus, we describe the remaining cases where $m(\delta)\in\{0,\dots,6\}$

Assume $\delta$ starts with the $++-$ subsequence. Let $P'=P\setminus\{p_1,p_2\}$. Let $p'_0,\dots,p'_8$ be the corresponding labeling of points in $P'$ where $p'_0=p_0$, $p'_1=p_3$, $p'_2=p_4$, and so on. By applying Theorem~\ref{9-thr} on $P'$, we get three compatible quadrilaterals $Q_1$, $Q_2$, and $Q_3$. We get the fourth quadrilateral by $Q_4=p_1p_2p_3p_4$; we have to make sure that $Q_4$ does not overlap any of $Q_1$, $Q_2$, and $Q_3$. The signature sequence of $P'$ starts with $-$, i.e., $p'_2$ has minus signature. By Remark 1, in the cone $\cone{p'_0}{p'_1}{p'_2}$, the region that is above the line through $p'_1$ and $p'_2$, is disjoint from the interiors of $Q_1$, $Q_2$, and $Q_3$. Thus, $Q_4$ is compatible with $Q_1$, $Q_2$, $Q_3$, and our result follows. Similarly, if $\delta$ ends with $-++$, then we obtain four compatible  quadrilaterals. Therefore, in the rest of this proof we assume that $\delta$ starts with $+-$ or $+++$, and ends with $-+$ or $+++$. 

\begin{itemize}
 \item $m(\delta)=0$: In this case $\delta$ is a plus subsequence of length eight. Our result follows by Observation~\ref{even-plus-obs}.
 \item $m(\delta)=1$: In this case $\delta$ has seven plus signs. By Observation~\ref{odd-minus-obs} we get one quadrilateral by the point with $-$ signature. If six of the plus signs are consecutive, then by Observation~\ref{even-plus-obs} we get three more quadrilaterals. Otherwise, $\delta$ has two disjoint subsequences of plus signs, one of which has length at least two and the other has length at least four. Again, by Observation~\ref{even-plus-obs} we get one quadrilateral from the first subsequence and two from the second subsequence. Therefore, we get four compatible quadrilaterals in total.
 \item $m(\delta)=2$: Assume the two minus signs are non-consecutive. Then, we get two quadrilaterals by Observation~\ref{odd-minus-obs}. Moreover, $\delta$ has either one plus subsequence of length at least four or two plus subsequences each of length at least two. In either case, we get two other quadrilaterals by Observation~\ref{even-plus-obs}. Assume the two minus signs are consecutive. Then, we get one quadrilateral by Observation~\ref{odd-minus-obs}. If the six plus signs are consecutive or partitioned into two subsequences of lengths two and four, then we get three other quadrilaterals by Observation~\ref{even-plus-obs}. The remaining cases are $+--+++++$, $+++--+++$ and their reverses. For the first sequence, by splitting it as $+--+|++++$ we get two quadrilaterals for the subsequence $+--+$, by Lemma~\ref{even-minus-lemma}. We get the third and fourth compatible quadrilaterals as follows: If in Lemma~\ref{even-minus-lemma} we land up in the last case where both $p_{i-1}$ and $p_{i+2k+2}$ are above $l_{i+k}$, then we get $p_1p_6p_7p_8$ and $p_1p_8p_9p_{10}$, otherwise we get $p_4p_6p_7p_8$ and $p_4p_8p_9p_{10}$. For the second sequence, by splitting it as $++|+--+|++$ we get two quadrilaterals for the subsequence $+--+$, by Lemma~\ref{even-minus-lemma}. We get the third and fourth compatible quadrilaterals as follows: If in Lemma~\ref{even-minus-lemma} we land up in the last case where both $p_{i-1}$ and $p_{i+2k+2}$ are above $l_{i+k}$, then we get $p_1p_2p_3p_8$ and $p_1p_8p_9p_{10}$, otherwise we get $p_1p_2p_3p_5$ and $p_6p_8p_9p_{10}$. Notice that Assertion 1 holds in both cases. 

 \item $m(\delta)=3$: If the three minus signs are pairwise non-consecutive, then $\delta$ has a plus subsequence of length at least two. Thus, we get three quadrilaterals by Observation~\ref{odd-minus-obs} and one by Observation~\ref{even-plus-obs}. If the three minus signs are consecutive, then $\delta$ either has a plus subsequence of length at least four or two plus subsequences each of length at least two. In either case, we get two quadrilaterals by Observation~\ref{odd-minus-obs} and two by Observation~\ref{even-plus-obs}. Thus, we assume that the minus signs are partitioned into two disjoint subsequences of lengths one and two. Then, we can get two quadrilaterals for the minus signs. If $\delta$ has a plus subsequence of length four or two disjoint plus subsequences each of length at least two, then we get two other quadrilaterals by the plus signs. Assuming otherwise, the remaining subsequences are $+-+++--+$, $+-+--+++$, $+--+-+++$, and their reverses. For the first sequence, by spitting it as $+|-|++|+--+$ we get one quadrilateral for the subsequence $-$, by Observation~\ref{odd-minus-obs}, and two quadrilaterals for the subsequence $+--+$, by Lemma~\ref{even-minus-lemma}. If in Lemma~\ref{even-minus-lemma} we land up in the last case where both $p_{i-1}$ and $p_{i+2k+2}$ are above $l_{i+k}$, then we get a fourth compatible quadrilateral $p_3p_4p_5p_{10}$, otherwise we get $p_3p_4p_5p_7$. Notice that Assertion 1 holds in this case. Later, we will describe how to handle the remaining two cases.

 \item $m(\delta)=4$: Since $\delta$ starts with $+-$ or $+++$, and ends with $-+$ or $+++$, we only have sequences $+---++-+$, $+----+++$, $+--++--+$, $+--+-+-+$, $+-+--+-+$, and their reverses. For the sequence $+---++-+$ we get one quadrilateral for the plus signs and three for the minus signs. The sequence $+----+++$ can be handled by splitting as $+----+|++$; we get three quadrilaterals for the first subsequence and one for the second subsequence. Later, we will describe how to handle the remaining three cases.
 \item $m(\delta)=5$: In this case $\delta$ starts with $+-$ and ends with $-+$. The third plus sign partitions the minus signs into two subsequences of length one and four, or two and three. Thus, only the sequences $+-+----+$, $+--+---+$, and their reverses are valid. Later, we will show how to handle these two cases. 
 \item $m(\delta)=6$: In this case $\delta=+------+$, and by Lemma~\ref{even-minus-lemma} we get four quadrilaterals.
\end{itemize}

Now, we show how to get four compatible quadrilaterals for each of the sequences $+-+--+++$, $+--+-+++$, $+--++--+$, $+--+-+-+$, $+-+--+-+$, $+-+----+$, and $+--+---+$.
\begin{itemize}
  \item $+-+--+++$: If $p_8$ is below $l_{4,7}$ as in Figure~\ref{11-pmpmmppp-fig}(a), then we get four compatible quadrilaterals $p_0p_2p_3p_4$, $p_4p_5p_6p_7$, $p_6p_9p_8p_7$, and $p_0p_4p_7p_8$. Assume $p_8$ is above $l_{4,7}$ as in Figure~\ref{11-pmpmmppp-fig}(b). Since $p_5$ is above $l_{4,7}$, $p_8$ is also above $l_{5,7}$. Since $p_8$ and $p_9$ have $+$ signatures, the points $p_5,p_7,p_8,p_9,p_{10}$ are in convex position. The point $p_6$ is either in the convex hull of these five points or above $l_{5,10}$. In either case, by Lemma~\ref{six-lemma} we get two compatible quadrilaterals above $l_{5,7}$. We get two other compatible quadrilaterals $p_0p_2p_3p_4$ and $p_0p_4p_5p_7$ below $l_{5,7}$. 

\begin{figure}[htb]
  \centering
\setlength{\tabcolsep}{0in}
  $\begin{tabular}{cc}
 \multicolumn{1}{m{.5\columnwidth}}{\centering\includegraphics[width=.35\columnwidth]{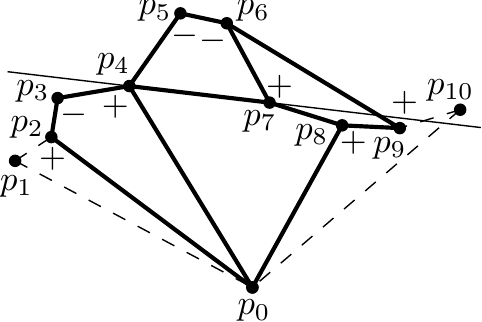}}
&\multicolumn{1}{m{.5\columnwidth}}{\centering\includegraphics[width=.35\columnwidth]{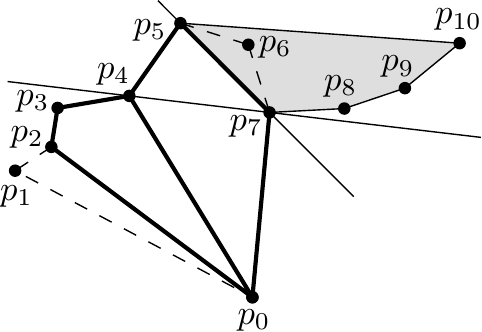}}
\\
(a) & (b)
\end{tabular}$
  \caption{The sequence $+-+--+++$. (a) $p_8$ is below $l_{4,7}$. (b) $p_8$ is above $l_{4,7}$.}
\label{11-pmpmmppp-fig}
\end{figure}

  \item $+--+-+++$: First we look at $p_1$. If $p_1$ is below $l_{2,5}$, then $p_0p_1p_2p_5$, $p_2p_3p_4p_5$, $p_0p_5p_6p_7$, and $p_7p_{10}p_9p_8$ are compatible. If $p_1$ is above $l_{2,5}$ and below $l_{3,4}$, then $p_1p_3p_4p_2$, $p_0p_2p_4p_5$, $p_0p_5p_6p_7$, and $p_7p_{10}p_9p_8$ are compatible. Assume $p_1$ is above $l_{3,4}$. Now we look at $p_6$. If $p_6$ is above $l_{3,4}$, then $p_1p_6p_4p_3$, $p_2p_3p_4p_5$, $p_0p_5p_6p_7$, and $p_7p_{10}p_9p_8$ are compatible. If $p_6$ is below $l_{3,4}$ and above $l_{2,5}$, then $p_6p_5p_3p_4$, $p_0p_2p_3p_5$, $p_0p_5p_6p_7$, and $p_7p_{10}p_9p_8$ are compatible. Assume $p_6$ is below $l_{2,5}$. Since $p_6$ has $-$ signature, $p_7$ is also below $l_{2,5}$. Now we look at $p_8$. If $p_8$ is below $l_{5,7}$ as in Figure~\ref{11-pmm-fig}(a), then $p_2p_3p_4p_5$, $p_2p_5p_6p_7$, $p_0p_2p_7p_8$, and $p_7p_{10}p_9p_8$ are compatible. Assume $p_8$ is above $l_{5,7}$. Since $p_8$ and $p_9$ have $+$ signatures, the points $p_5,p_7,p_8,p_9,p_{10}$ are in convex position. The point $p_6$ is either in the convex hull of these five points or above $l_{5,10}$. In either case, by Lemma~\ref{six-lemma} we get two quadrilaterals above $l_{5,7}$; These two quadrilaterals are compatible with $p_0p_2p_5p_7$ and $p_2p_3p_4p_5$.

   \item $+--++--+$: Notice that a $-$ sign introduces one quadrilateral, and a subsequence $++$ also introduces a quadrilateral. As in the previous case if $p_1$ is below $l_{2,5}$, or above $l_{2,5}$ and below $l_{3,4}$, then we get one extra compatible quadrilateral. Assume that $p_1$ is above $l_{3,4}$. Similarly we can assume that $p_{10}$ is above $l_{7,8}$. If $p_7$ (or $p_8$) is above $l_{3,4}$, then we get one extra compatible quadrilateral $p_1p_7p_4p_3$ (or $p_1p_8p_4p_3$). Similarly, if $p_4$ or $p_3$ is above $l_{7,8}$, then we can get one extra compatible quadrilateral. Assume $p_7$ and $p_8$ are below $l_{3,4}$, and $p_3$ and $p_4$ are below $l_{7,8}$ as in Figure~\ref{11-pmm-fig}(b). If $p_2$ (or $p_9$) is below $l_{5,6}$, then we get an extra compatible quadrilateral $p_0p_2p_5p_6$ (or $p_0p_5p_6p_9$). Assume both $p_2$ and $p_9$ are above $l_{5,6}$, then $p_3$ and $p_8$ are also above $l_{5,6}$ as in Figure~\ref{11-pmm-fig}(b). In this case $p_3,p_4,p_5,p_6,p_7,p_8$ are in convex position and by Lemma~\ref{six-lemma} we get two quadrilaterals; these two quadrilaterals are compatible with $p_0p_2p_3p_5$ and $p_0p_6p_8p_9$. 

\begin{figure}[htb]
	\centering
	\setlength{\tabcolsep}{0in}
	$\begin{tabular}{cc}
	\multicolumn{1}{m{.5\columnwidth}}{\centering\includegraphics[width=.35\columnwidth]{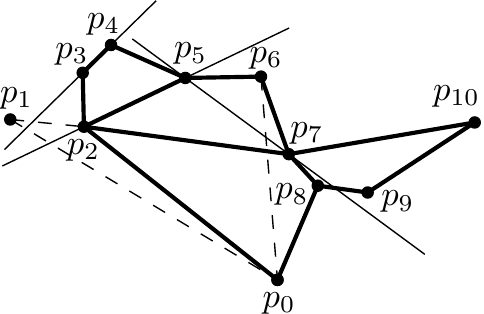}}
	&\multicolumn{1}{m{.5\columnwidth}}{\centering\includegraphics[width=.35\columnwidth]{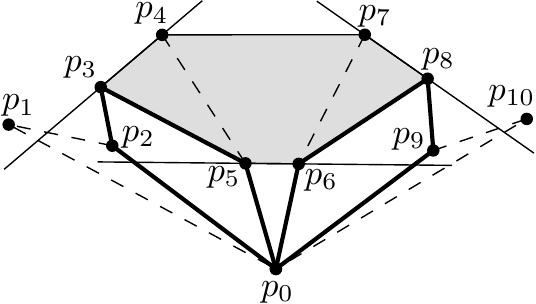}}
	\\
	(a) & (b)
	\end{tabular}$
	\caption{(a) The sequence $+--+-+++$ where $p_1$ is above $l_{3,4}$, $p_6$ is below $l_{2,5}$, and $p_8$ is below $l_{5,7}$. (b) The sequence $+--++--+$, where where $p_1$ is above $l_{3,4}$, $p_{10}$ is above $l_{7,8}$, $p_3$ and $p_4$ are below $l_{7,8}$, $p_7$ and $p_8$ are below $l_{3,4}$, and both $p_3$ and $p_8$ are above $l_{5,6}$.}
	\label{11-pmm-fig}
\end{figure}

\item $+--+-+-+$: Each of $p_6$ and $p_8$, which have $-$ signatures, introduces a quadrilateral. As in the previous cases, if $p_1$ is below $l_{2,5}$, or above $l_{2,5}$ and below $l_{3,4}$, then we get two other quadrilaterals. Assume that $p_1$ is above $l_{3,4}$. If $p_6$ is above $l_{3,4}$, or below $l_{3,4}$ and above $l_{2,5}$, then we get two other quadrilaterals. Assume that $p_6$ is below $l_{2,5}$; consequently $p_7$ is below $l_{2,5}$. See Figure~\ref{11-pmm-fig2}(a). Now we look at $p_8$. If $p_8$ is above $l_{5,6}$, then we get two other quadrilaterals $p_2p_3p_4p_5$ and $p_4p_8p_6p_5$. If $p_8$ is below $l_{5,6}$ and above $l_{5,7}$, then $p_0p_2p_5p_7$, $p_2p_3p_4p_5$, $p_0p_7p_8p_9$, and $p_5p_6p_8p_7$ are compatible. Assume $p_8$ is below $l_{5,7}$; consequently $p_9$ is below $l_{5,7}$ as in Figure~\ref{11-pmm-fig2}(a). Now we look at $p_{10}$. If $p_{10}$ is below $l_{7,9}$, then $p_0p_2p_9p_{10}$, $p_2p_3p_4p_5$, $p_2p_5p_6p_7$, and $p_2p_7p_8p_9$ are compatible. If $p_{10}$ is above $l_{7,9}$ and below $l_{7,8}$ as in Figure~\ref{11-pmm-fig2}(a), then $p_7p_8p_{10}p_9$, $p_2p_3p_4p_5$, $p_2p_5p_6p_7$, and $p_0p_2p_7p_9$ are compatible. Otherwise, $p_{10}$ is above $l_{7,8}$, and thus, $p_6p_{10}p_8p_7$, $p_2p_3p_4p_5$, $p_0p_5p_6p_7$, and $p_0p_7p_8p_9$ are compatible.

\begin{figure}[htb]
  \centering
\setlength{\tabcolsep}{0in}
  $\begin{tabular}{cc}
 \multicolumn{1}{m{.5\columnwidth}}{\centering\includegraphics[width=.35\columnwidth]{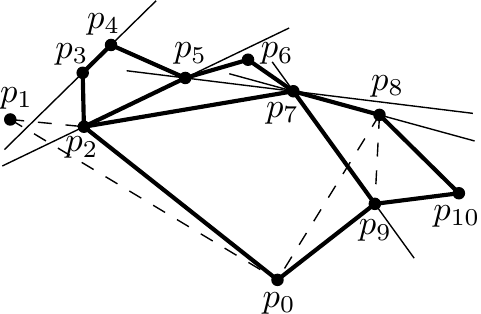}}
&\multicolumn{1}{m{.5\columnwidth}}{\centering\includegraphics[width=.35\columnwidth]{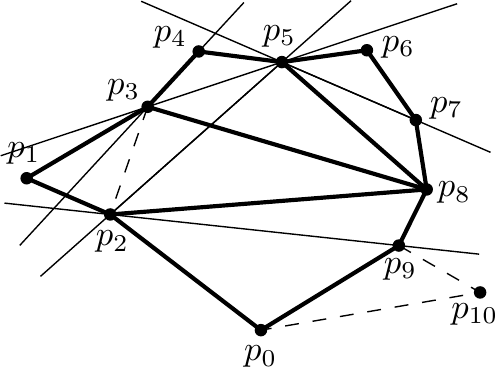}}
\\
(a) & (b)
\end{tabular}$
  \caption{The sequence $+--+-+-+$ where $p_1$ is above $l_{3,4}$, $p_6$ is below $l_{2,5}$, $p_8$ is below $l_{5,7}$, and $p_{10}$ is below $l_{7,8}$ and above $l_{7,9}$. (b) The sequence $+--+---+$ where $p_1$ is above $l_{3,4}$ and $l_{2,9}$ and below $l_{3,5}$, $p_6$ is below $l_{3,5}$ and $l_{3,5}$, and $p_4$ is below $l_{5,7}$.}
\label{11-pmm-fig2}
\end{figure}

  \item $+-+--+-+$: Each of $p_3$ and $p_8$, which have $-$ signatures, introduces a quadrilateral. If none of $p_3$ and $p_8$ is below $l_{4,7}$, then as in previous cases we get two other compatible quadrilaterals. Thus, assume without loss of generality that $p_8$ is below $l_{4,7}$; consequently $p_9$ is below $l_{4,7}$. If $p_{10}$ is below $l_{7,9}$, then $p_0p_2p_3p_4$, $p_4p_5p_6p_7$, $p_4p_7p_8p_9$, and $p_0p_4p_9p_{10}$ are compatible. If $p_{10}$ is above $l_{7,9}$ and below $l_{8,8}$, then $p_7p_8p_{10}p_9$, $p_4p_5p_6p_7$, $p_0p_4p_7p_9$, and $p_0p_2p_3p_4$ are compatible. Otherwise, $p_{10}$ is above $l_{7,8}$, and thus, $p_0p_2p_3p_4$, $p_4p_5p_6p_7$, $p_0p_7p_8p_9$, and  $p_6p_{10}p_8p_7$ are compatible.

  \item $+-+----+$: The point $p_3$ introduces a quadrilateral, and the four consecutive minus signs introduce two quadrilaterals. If $p_{10}$ is below any of the lines $l_{4,9}$, $l_{5,8}$, and $l_{6,7}$, then as in the proof of Lemma~\ref{even-minus-lemma} we get one extra compatible quadrilateral. Assume that $p_{10}$ is above all these lines. Now, if $p_3$ is above any of $l_{4,9}$, $l_{5,8}$, and $l_{6,7}$, again as in the proof of Lemma~\ref{even-minus-lemma} we get one extra compatible quadrilateral. Assume that $p_3$ is below these lines and specifically below $l_{4,9}$; consequently $p_2$ is below $l_{4,9}$. Since $p_2$ has $+$ signature, $p_1$ is above $l_{2,3}$. Now, if $p_1$ is above $l_{3,4}$, then we get an extra compatible quadrilateral $p_1p_5p_4p_3$. Otherwise, $p_1$ is below $l_{3,4}$ and above $l_{2,4}$, and thus, $p_1p_3p_4p_2$, $p_0p_2p_4p_9$, $p_4p_5p_8p_9$, and $p_5p_6p_7p_8$ are compatible. 
  \item $+--+---+$: Notice that we can get three compatible quadrilaterals below the chain $p_2,p_4,\dots, p_9$ (and $p_2,p_3,p_5,\dots, p_9$). First we look at $p_1$. If $p_1$ is below $l_{2,5}$, the $p_2p_3p_4p_5$, $p_0p_1p_2p_5$, $p_0p_5p_6p_7$, and $p_0p_7p_8p_9$ are compatible. Assume that $p_1$ is above $l_{2,5}$. If $p_1$ is below $l_{3,4}$, then we get one extra compatible quadrilateral $p_1p_3p_4p_2$. Assume that $p_1$ is above $l_{3,4}$. If $p_1$ is above $l_{4,5}$, then $p_1p_6p_5p_4$ is an extra compatible quadrilateral. If $p_1$ is below $l_{4,5}$ and above $l_{3,5}$, then $p_1p_4p_5p_3$ is an extra compatible quadrilateral. Assume that $p_1$ is below $l_{3,5}$. If $p_1$ is below $l_{2,9}$, then $p_0p_1p_2p_9$, $p_2p_3p_4p_5$, $p_5p_6p_7p_8$, and $p_2p_5p_8p_9$ are compatible. Assume that $p_1$ is above $l_{2,9}$. See Figure~\ref{11-pmm-fig2}(b). Now we look at $p_6$. If $p_6$ is above $l_{3,4}$, then we get an extra compatible quadrilateral $p_1p_6p_4p_3$. If $p_6$ is below $l_{3,4}$ and above $l_{3,5}$, then we get an extra compatible quadrilateral $p_3p_4p_6p_5$. Assume that $p_6$ is below $l_{3,5}$; consequently $p_7$, $p_8$, and $p_9$ are below $l_{3,5}$. If $p_4$ is above $l_{7,5}$, then $p_4p_6p_7p_5$, $p_2p_3p_4p_5$, $p_5p_7p_8p_9$, and $p_0p_2p_5p_9$ are compatible. Assume that $p_4$, and consequently $p_3$ are below $l_{5,7}$ as in Figure~\ref{11-pmm-fig2}(b). In this case $\cone{p_1}{p_2}{p_5}$ contains $p_6$ $p_7$, $p_8$, and $p_9$. Thus, $p_5p_6p_7p_8$, $p_3p_4p_5p_8$, $p_1p_3p_8p_2$, and $p_0p_2p_8p_9$ are compatible.

\end{itemize}

\section{Compatible 4-holes in $n$-sets}
\label{n-section}
In this section we prove our main claim for large point sets, that is, every $n$-set contains at least $\lfloor 5n/11\rfloor -1$ compatible 4-holes. As in Section~\ref{preliminary-section}, by combining Theorems~\ref{9-thr} and \ref{11-thr} with the idea of partitioning the points into some cones with respect to their radial ordering about a point $p_0$, we can improve the lower bound on the number of compatible 4-holes in an $n$-set to $3\cdot\lfloor(n-2)/7\rfloor$ and $4\cdot\lfloor(n-2)/9\rfloor$, respectively. In the rest of this section, we first prove a lemma, that can be used to improve these bounds further. We denote by $ab$ the straight-line through two points $a$ and $b$. We say that a 4-hole $Q$ is {\em compatible with} a point set $A$ if the interior of $Q$ is disjoint from the interior of the convex hull of $A$.

\begin{lemma}
\label{extraQ-lemma}
For every $(r{+}s)$-set, with $r,s\geqslant 4$, we can divide the plane into two internally disjoint convex regions such that one region contains a set $A$ of at least $s$ points, the other region contains a set $B$ of at least $r$ points, and there exists a 4-hole that is compatible with $A$ and $B$.
\end{lemma}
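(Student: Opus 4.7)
The plan is to separate $P$ by a straight line into the two point sets $A$ and $B$ of the prescribed sizes, and then exhibit a 4-hole whose interior lies in the gap between $CH(A)$ and $CH(B)$.

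First, by sweeping a directed line across $P$ (after a small rotation, if necessary, so that all projections are distinct), I would pick a line $\ell$ such that exactly $s$ points lie strictly on one side and exactly $r$ on the other. Set $A$ to be the $s$-set and $B$ the $r$-set. The two closed half-planes bounded by $\ell$ are the two internally disjoint convex regions required, with $A$ on one side and $B$ on the other; in particular $CH(A)$ and $CH(B)$ lie in opposite open half-planes, so they have disjoint interiors.

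Next, to construct the 4-hole I would pick two adjacent vertices $a_1,a_2$ of $CH(A)$ forming an edge of $CH(A)$ that ``faces'' $B$, and analogously two adjacent vertices $b_1,b_2$ of $CH(B)$ with $b_1b_2$ an edge of $CH(B)$ facing $A$. The quadrilateral with vertex set $\{a_1,a_2,b_1,b_2\}$, taken in the cyclic order induced by the geometry (namely $a_1,a_2,b_2,b_1$), is convex because $a_1a_2$ is an edge of $CH(A)$ with $CH(A)$ on one side of its supporting line, $b_1b_2$ is an edge of $CH(B)$ with $CH(B)$ on the opposite side, and the two remaining sides $a_1b_1,\ a_2b_2$ cross the gap between the hulls. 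Since the interior of this quadrilateral lies on the side of line $a_1a_2$ opposite to $CH(A)$ and on the side of line $b_1b_2$ opposite to $CH(B)$, it is disjoint from the interiors of both hulls. Moreover every point of $P$ lies in $A\cup B\subseteq CH(A)\cup CH(B)$, so no point of $P$ can lie in the gap; hence the quadrilateral is empty of $P$-points and is a 4-hole in $P$, compatible with both $A$ and $B$.

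The main obstacle is to guarantee the existence of the ``facing'' edges $a_1a_2,\ b_1b_2$ giving a convex quadrilateral. I would produce them from an outer common tangent of $CH(A)$ and $CH(B)$: the tangent touches $CH(A)$ at a vertex $v_A$ and $CH(B)$ at a vertex $v_B$, and I would choose $a_1a_2$ to be the edge of $CH(A)$ incident to $v_A$ leading into $CH(A)$ away from the tangent (and symmetrically $b_1b_2$). The hard part will be a short case analysis showing that this choice always yields a convex quadrilateral whose interior is strictly between the two hulls, in particular when $CH(A)$ or $CH(B)$ has a single vertex on the chain facing the other (i.e.\ ``points at'' the other hull). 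In that degenerate case one either rotates $\ell$ slightly, so that the new partition puts the offending hull in a position with at least two boundary vertices on the facing chain, or one replaces the missing edge by the two edges incident to the pointing vertex and repeats the argument on the side where an actual edge exists.
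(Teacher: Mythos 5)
There is a genuine gap at the heart of your argument: the existence of an edge $e_A$ of $\CH{A}$ and an edge $e_B$ of $\CH{B}$ spanning a convex quadrilateral whose interior avoids both hulls. This is not merely a ``short case analysis'' left to the reader --- it is false for some line-separated partitions. Take $A$ with hull vertices $(0,0)$, $(-1,-1)$, $(1,-1)$ (plus one interior point) and $B$ with hull vertices $(-10,\,0.1)$, $(10,\,0.1)$, $(0,10)$ (plus one interior point), separated by the horizontal line $y=0.05$. For a quadrilateral $\mathrm{conv}(e_A\cup e_B)$ to be compatible with $A$ it is necessary (not just sufficient) that both endpoints of $e_B$ lie strictly on the side of the supporting line of $e_A$ opposite to $\CH{A}$, and symmetrically for $B$; checking all nine edge pairs in this configuration shows that every pair violates one of the two conditions (e.g.\ the line $y=x$ through the edge $(-1,-1)(0,0)$ has the vertex $(10,\,0.1)$ of $B$ on the $\CH{A}$ side, while the supporting lines of the two slanted edges of $\CH{B}$ have all of $\CH{A}$ on the $\CH{B}$ side). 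Worse, any 4-hole compatible with both sets that uses exactly two points from each side \emph{must} use an adjacent pair of hull vertices on each side (otherwise the quadrilateral contains a chord through a hull interior), so for this partition no $2{+}2$ quadrilateral of any kind exists. Your fallbacks do not repair this: rotating $\ell$ slightly leaves the partition, and hence the two hulls, unchanged, and ``repeating the argument on the side where an edge exists'' is not a construction.

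This is precisely why the paper's proof is structured differently. It fixes a particular partition (by a line through the lowest point $a_1$ and a point $a_2$ chosen by counting), and then runs a case analysis in which the 4-hole may use three points from one set and one from the other (e.g.\ $b_1b_2b_3a_2$), and in which points are reassigned across the separating structure afterwards (e.g.\ the good pair becomes $A$ and $(B\setminus\{b_1\})\cup\{a_1\}$, or $(A\setminus\{a_2\})\cup\{b_1\}$ and $(B\setminus\{b_1\})\cup\{a_2\}$) so that both cardinality bounds and the disjoint convex regions are preserved. If you want to salvage your approach you would need either (i) a proof that \emph{some} sweep direction always yields a partition admitting an edge--edge bridge, which is far from obvious, or (ii) the paper's machinery of $3{+}1$ quadrilaterals together with a point-exchange step; as written, the proposal does not prove the lemma.
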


Before proving this lemma, we note that a similar lemma has been proved by Hosono and Urabe~(Lemma 3 in~\cite{Hosono2001}) for disjoint 4-holes, where they obtain a set $A'$ of $s{-}2$ points, a set $B'$ of $r{-}2$ points, and a 4-hole $Q$ that is disjoint from $A'$ and $B'$. However, their lemma does not imply our Lemma~\ref{extraQ-lemma}, because it might not be possible to add two points of $Q$ to $A'$ to obtain a set $A$ of $s$ points such that $Q$ is compatible with $A$.

In the following proof, if there exist two internally disjoint convex regions such that one of them contains a set $A$ of $s$ points, the other contains a set $B$ of $r$ points, and there exists a 4-hole that is compatible with $A$ and $B$, then we say that $A$ and $B$ are {\em good}.

\begin{proof}[Proof of Lemma~\ref{extraQ-lemma}]
Consider an $(r{+}s)$-set. In this proof a ``point'' refers to a point from this set. Also when we say that a convex shape is ``empty'' we mean that its interior does not contain any point from this set.  

\begin{figure}[htb]
  \centering
\setlength{\tabcolsep}{0in}
  $\begin{tabular}{ccc}
 \multicolumn{1}{m{.33\columnwidth}}{\centering\includegraphics[width=.32\columnwidth]{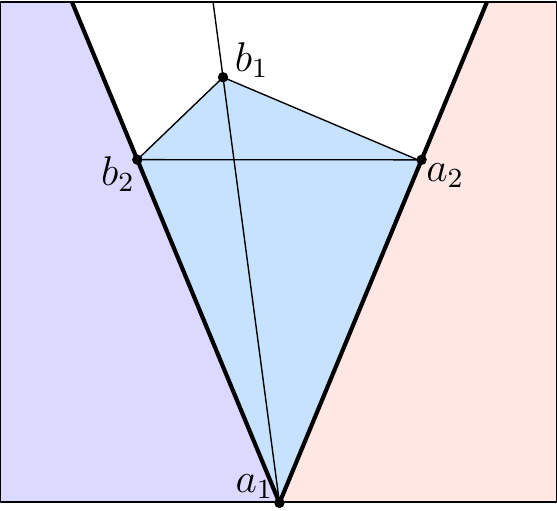}}
&\multicolumn{1}{m{.33\columnwidth}}{\centering\includegraphics[width=.32\columnwidth]{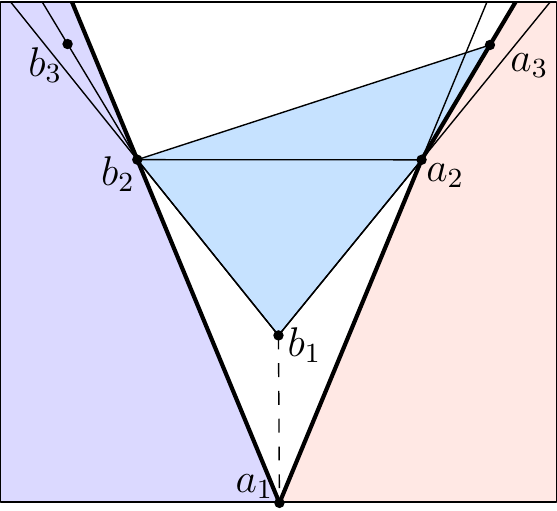}}
&\multicolumn{1}{m{.33\columnwidth}}{\centering\includegraphics[width=.32\columnwidth]{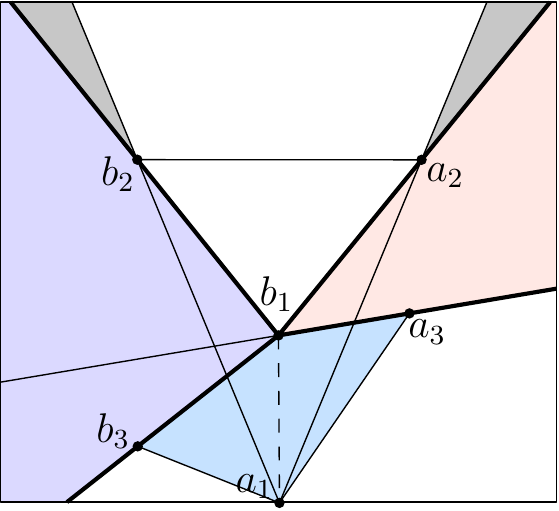}}
\\
(a)&(b)&(c)\\
 \multicolumn{1}{m{.33\columnwidth}}{\centering\includegraphics[width=.32\columnwidth]{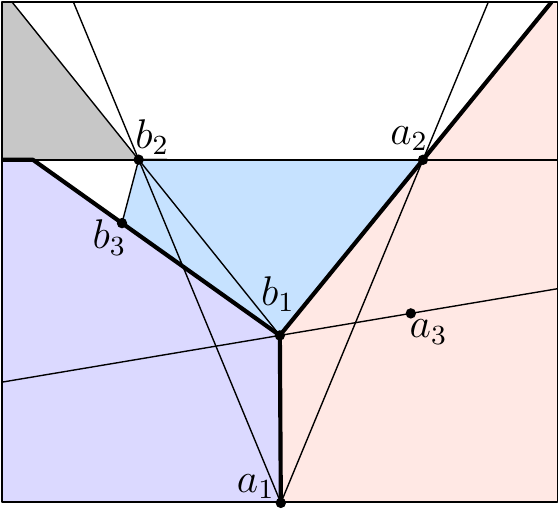}}
&\multicolumn{1}{m{.33\columnwidth}}{\centering\includegraphics[width=.32\columnwidth]{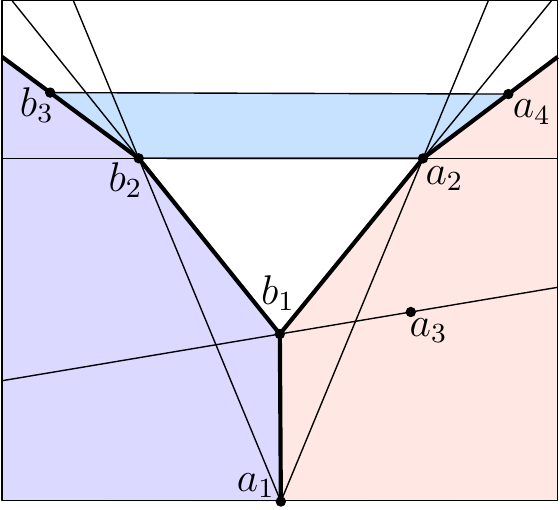}}
&\multicolumn{1}{m{.33\columnwidth}}{\centering\includegraphics[width=.32\columnwidth]{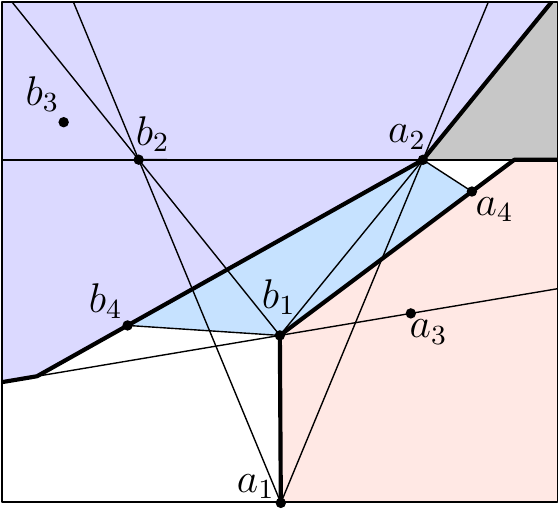}}
\\
(d)&(e)&(f)
\end{tabular}$
  \caption{Illustration of Lemma~\ref{extraQ-lemma}. The convex regions with $r$ and $s$ points are shown in light purple and light orange colors, respectively. The compatible 4-holes with these regions are in blue color. The grey regions are empty.}
\label{extraQ-fig}
\end{figure}

Let $a_1$ be a point on the convex hull of this set, and without loss of generality assume that $a_1$ is the lowest point. Let $a_2$ be the point such that $s{-}2$ points are to the right side of the line $a_1a_2$. Let $A$ be the set of points that are on or to the right side of $a_1a_2$, and let $B$ be the set of other points. Notice that $A$ contains $s$ points and $B$ contains $r$ points. Let $b_1$ be the point of $B$ such that the interior of $\cone{a_1}{a_2}{b_1}$ does not contain any point. Let $b_2$ be the point of $B$ such that the interior of $\cone{a_1}{a_2}{b_2}$ contains only $b_1$. See Figure~\ref{extraQ-fig}(a).

If $b_1$ is not in the interior of the triangle $\bigtriangleup a_1a_2b_2$, then $a_1a_2b_1b_2$ is a 4-hole that is compatible with $A$ and $(B\setminus\{b_1\})\cup\{a_1\}$. As shown in Figure~\ref{extraQ-fig}(a), the interiors of the convex hulls of these two sets are disjoint, and thus, these two sets are good. Assume that $b_1$ is in the interior of $\bigtriangleup a_1a_2b_2$. We consider two cases depending on whether or not $\cone{b_1}{b_2}{a_2}$ is empty.  
\begin{itemize}
 \item $\cone{b_1}{b_2}{a_2}$ is not empty. If $\cone{b_1}{b_2}{a_2}$ contains a point of $A$, then let $a_3$ be such a point that is the neighbor of $a_2$ on $\CH{A}$; see Figure~\ref{extraQ-fig}(b). Then $b_1b_2a_3a_2$ is a 4-hole, and $A$ and $(B\setminus\{b_1\})\cup\{a_1\}$ are good. If $\cone{b_1}{b_2}{a_2}$ contains a point of $B$, then let $b_3$ be such a point that is the neighbor of $b_2$ on $\CH{B}$. Then $b_1b_2b_3a_2$ is a 4-hole, and $A$ and $(B\setminus\{b_1\})\cup\{a_1\}$ are good. 
\item $\cone{b_1}{b_2}{a_2}$ is empty. Let $a_3$ be the attack point of $\rotatingl{b_1}{a_1}{a_2}$; recall that this is the first point that $\rotatingl{b_1}{a_1}{a_2}$ meets. If the attack point of $\rotatingl{b_1}{a_1}{b_2}$ is below $b_1a_3$, then let $b_3$ be that point; Figure~\ref{extraQ-fig}(c). In this case $b_1a_3a_1b_3$ is a 4-hole, and $(A\setminus\{a_1\})\cup\{b_1\}$ and $B$ are good. Assume that the attack point of $\rotatingl{b_1}{a_1}{b_2}$ is above $b_1a_3$. We consider the following two cases depending on whether or not there is a point of $B$ above the line $a_2b_2$.
\begin{itemize}
  \item No point of $B$ is above $a_2b_2$. Let $b_3$ be the attack point of $\rotatingl{b_1}{b_2}{a_1}$ as in Figure~\ref{extraQ-fig}(d). Then $b_1b_3b_2a_2$ is a 4-hole, and $A\cup\{b_1\}$ and $(B\setminus\{b_2\})\cup\{a_1\}$ are good.

  \item Some point of $B$ is above $a_2b_2$. Let $b_3$ be such a point that is the neighbor of $b_2$ on $\CH{B}$ as in Figure~\ref{extraQ-fig}(e). If some point of $A$ is above $a_2b_2$, then let $a_4$ be such a point that is the neighbor of $a_2$ on $\CH{A}$; see Figure~\ref{extraQ-fig}(e). Then $a_2b_2b_3a_4$ is a 4-hole, and $A\cup\{b_1\}$ and $B\cup\{a_1\}$ are good. Assume that no point of $A$ is above $a_2b_2$. Let $a_4$ be the attack point of $\rotatingl{b_1}{a_2}{a_3}$ and $b_4$ be the attack point of $\rotatingl{a_2}{b_1}{b_2}$ as in Figure~\ref{extraQ-fig}(f). Notice that it might be the case that $b_4=b_2$. In either case, $b_1b_4a_2a_4$ is a 4-hole, and $(A\setminus\{a_2\})\cup\{b_1\}$ and $(B\setminus\{b_1\})\cup\{a_2\}$ are good.
\end{itemize}
\end{itemize}
\end{proof}

\begin{theorem}
Every $n$-set contains at least $\lfloor 5n/11\rfloor - 1$  compatible 4-holes. 
\end{theorem}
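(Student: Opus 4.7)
The plan is to prove the bound by strong induction on $n$, using Lemma~\ref{extraQ-lemma} as the engine of the induction and the small-case results proved earlier in the paper as base cases. Conceptually, one application of Lemma~\ref{extraQ-lemma} peels off an $11$-point subproblem and adds one ``free'' compatible 4-hole, which is exactly what is needed to gain $5$ holes per $11$ points.

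For the inductive step, assume $n\geqslant 15$ and apply Lemma~\ref{extraQ-lemma} with $r=11$ and $s=n-11\geqslant 4$. This produces two interior-disjoint convex regions containing point sets $A$ and $B$ with $|B|=11$ and $|A|=s$, together with a single 4-hole $Q$ whose interior is disjoint from both $\CH{A}$ and $\CH{B}$. I would apply Theorem~\ref{11-thr} to $B$ (yielding $4$ compatible 4-holes inside $\CH{B}$) and the inductive hypothesis to $A$ (yielding $\lfloor 5s/11\rfloor - 1$ compatible 4-holes inside $\CH{A}$), and then add $Q$. Since $\CH{A}$, $\CH{B}$, and $Q$ are pairwise interior-disjoint, the whole family is pairwise compatible, and the count telescopes:
\[
\Bigl(\Bigl\lfloor \tfrac{5(n-11)}{11} \Bigr\rfloor - 1\Bigr) + 4 + 1 \;=\; \Bigl\lfloor \tfrac{5n}{11} \Bigr\rfloor - 1.
\]

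For the base cases $4\leqslant n\leqslant 14$, the bound is to be verified directly: Theorem~\ref{Klein-thr} handles $n\in\{4,5,6\}$, Theorem~\ref{Sakai-thr} handles $n\in\{7,8\}$, Theorem~\ref{9-thr} handles $n\in\{9,10\}$, and Theorem~\ref{11-thr} handles $n\in\{11,12,13\}$. The only genuinely nontrivial base case is $n=14$, where the target $\lfloor 70/11\rfloor -1 = 5$ exceeds what Theorem~\ref{11-thr} alone delivers. For this case I would invoke Lemma~\ref{extraQ-lemma} once more with $r=9$ and $s=5$: Theorem~\ref{9-thr} on the $9$-set gives $3$ compatible 4-holes, Theorem~\ref{Klein-thr} on the $5$-set gives $1$, and the lemma supplies the $5$th.

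The main obstacle I anticipate is bookkeeping rather than arithmetic. Lemma~\ref{extraQ-lemma} only guarantees $|A|\geqslant s$ and $|B|\geqslant r$, and its construction may reassign a vertex (such as $a_1$ in the proof of the lemma) between the two sides; one has to check that every point of the original $n$-set is accounted for — up to possibly sharing a single vertex of $Q$ between $A$ and $B$ — so that the two recursive calls together with $Q$ really use $n$ points, and that the interior-disjointness of $Q$ from $\CH{A}$ and $\CH{B}$ translates, via transitivity of interior-disjointness, into compatibility with all the 4-holes produced recursively. A secondary, smaller obstacle is that the uniform inductive step with $r=11$ does not handle $n=14$ (it would require $s=3<4$), which is why a separate $(r,s)=(9,5)$ split must be carved out as its own case.
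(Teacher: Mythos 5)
Your proposal is correct and follows essentially the same route as the paper: induction with Lemma~\ref{extraQ-lemma} peeling off an $11$-point block plus one extra 4-hole per step, with the small theorems covering $n\leqslant 13$. The only (immaterial) difference is the base case $n=14$, which you settle with a $(9,5)$ split via Theorems~\ref{9-thr} and~\ref{Klein-thr} while the paper uses a $(7,7)$ split via Theorem~\ref{Sakai-thr}; both yield the required five compatible 4-holes.
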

\begin{proof}
Let $P$ be an $n$-set. Our proof is by induction on the number of points in $P$. The base cases happen when $|P|\leqslant 14$. If $|P|\leqslant 13$, then our claim follows from one of Theorems~\ref{Klein-thr}, \ref{Sakai-thr}, \ref{9-thr}, or \ref{11-thr}. If $|P|=14$, then by applying Lemma~\ref{extraQ-lemma} on $P$ with $r=s=7$ we get a 4-hole together with two sets $A$ and $B$ each containing at least 7 points. By Theorem~\ref{Sakai-thr} we get two 4-holes in each of $A$ and $B$. Thus, we get five compatible 4-holes in total. This finishes our proof for the base cases.

Assume that $|P|\geqslant 15$. By applying Lemma~\ref{extraQ-lemma} on $P$ with $r=n{-}11$  and $s=11$ (notice that $r$ is at least four as required by this lemma) we get a 4-hole together with two sets $A$ and $B$ such that the interiors of their convex hulls are disjoint, $A$ contains at least 11 points, and $B$ contains at least $n{-}11$ points.  
By Theorem~\ref{11-thr} we get four compatible 4-holes in $\CH{A}$. By induction, we get $\lfloor 5(n-11)/11\rfloor - 1$ compatible 4-holes in $\CH{B}$. Therefore, in total, we get $$1+4+\left(\left\lfloor \frac{5(n-11)}{11}\right\rfloor - 1\right)=\left\lfloor\frac{5n}{11}\right\rfloor - 1$$ compatible 4-holes in $P$.  
\end{proof}

An $O(n\log^2 n)$-time algorithm for computing this many 4-holes follows from the proofs, by using a dynamic convex hull data structure for computing the sets $A$ and $B$ in Lemma~\ref{extraQ-lemma}. 
\removed{
\section{Conclusions}
\label{conclusions-section}
In this paper we studied the problem of finding the maximum number compatible 4-holes in any $n$-set. The best possible bound is at most $\lceil n/2\rceil{-}2$ as for every $n$ there exists an $n$-set with at most this many compatible 4-holes. We proved tight bounds for 9-sets and 11-sets. Specifically, we showed that every 11-set contains at least four compatible 4-holes. By combining this with Lemma~\ref{extraQ-lemma} we proved that every $n$-set contains at least $\lfloor 5n/11\rfloor {-} 1$ compatible 4-holes. A natural open problem is to improve this bound. By doing a more extensive case analysis, it may be possible to show that every 13-set contains at least five compatible 4-holes, and thus, by using Lemma~\ref{extraQ-lemma}, this would improve the lower bound to $\lfloor 6n/13\rfloor {-} 1$. However, the main open problem is to prove or disprove our following conjecture:
 
\begin{conjecture}
Every $n$-set contains at least $n/2 -O(1)$ compatible 4-holes.
\end{conjecture}
}
\bibliographystyle{abbrv}
\bibliography{Convex-Quadrilaterals.bib}
\end{document}